\newcommand{\sa}{synchronizing automata}
\newcommand{\A}{\mathcal{A}}
\newcommand{\B}{\mathcal{B}}
\newcommand{\C}{\mathcal{C}}
\newcommand{\F}{\mathcal{F}}
\newcommand{\G}{\mathcal{G}}
\renewcommand{\S}{\mathcal{S}}
\renewcommand{\O}{\mathcal{O}}
\begin{document}

\begin{frontmatter}

%% Title, authors and addresses
\title{Synchronizing Random Almost-Group Automata\thanks{This work is supported by the French National Agency through ANR-10-LABX-58, Russian Foundation for Basic Research, grant no.\ 16-01-00795, and the Competitiveness  Enhancement Program of Ural Federal University. A major part of the research was conducted during the scientific collaboration under the Metchnikov program arranged by French Embassy in Russia.}}

%% use the tnoteref command within \title for footnotes;
%% use the tnotetext command for the associated footnote;
%% use the fnref command within \author or \address for footnotes;
%% use the fntext command for the associated footnote;
%% use the corref command within \author for corresponding author footnotes;
%% use the cortext command for the associated footnote;
%% use the ead command for the email address,
%% and the form \ead[url] for the home page:
%%
%% \title{Title\tnoteref{label1}}
%% \tnotetext[label1]{}
%% \author{Name\corref{cor1}\fnref{label2}}
%% \ead{email address}
%% \ead[url]{home page}
%% \fntext[label2]{}
%% \cortext[cor1]{}
%% \address{Address\fnref{label3}}
%% \fntext[label3]{}

%% use optional labels to link authors explicitly to addresses:
%% \author[label1,label2]{<author name>}
%% \address[label1]{<address>}
%% \address[label2]{<address>}

\author{Mikhail V. Berlinkov\inst{1}, Cyril Nicaud\inst{2}}

\institute{Institute of Natural Sciences and Mathematics, \\
Ural Federal University, Ekaterinburg, Russia, 620062\\
\email{m.berlinkov@gmail.com}
\and
LIGM, Universit\'{e} Paris-Est and CNRS,
5 bd Descartes, Champs-sur-Marne, 77454 Marne-la-Vall\'{e}ee Cedex 2, France\\ \email{cyril.nicaud@u-pem.fr}
}

\maketitle

\begin{abstract}
In this paper we address the question of synchronizing random automata in the critical settings
of almost-group automata. Group automata are automata where all letters act as permutations on the set of states, and they are not synchronizing (unless they have one state). In almost-group automata, one of the letters acts as a permutation on $n-1$ states, and the others as permutations. We prove that this small change is enough  for automata to become synchronizing with high probability. More precisely, we establish that the probability that a strongly connected almost-group automaton is not synchronizing is $\frac{2^{k-1}-1}{n^{2(k-1)}}(1+o(1))$, for a $k$-letter alphabet.
\end{abstract}

%\begin{keyword}
%Random automata \sep Synchronization \sep Almost-Group Automata
%\end{keyword}

\end{frontmatter}

%\linenumbers
\section{Introduction}
\label{sec:intro}

A deterministic automaton is called \emph{synchronizing} when there exists a word that brings every state to the same state. If it exists, such a word is called \emph{reset} or \emph{synchronizing}. 

Synchronizing automata serve as natural models of error-resistant systems because a reset word allows to turn a system into a known state, thus reestablishing the control over the system. For instance, prefix code decoders can be represented by automata. If an automaton corresponding to a decoder is synchronizing, then decoding a reset word, after an error appeared in the process, would recover the correct decoding process. 
%Although there are two settings depending on whether or not an observer knows that an error appeared and can apply a reset word, even if applying a reset word manually is not an option, it is usually the case that with high probability it will be applied automatically. 

There has been a lot of research done on synchronizing automata since pioneering works of \v{C}ern\'{y}~\cite{Ce64}. Two questions that attract major interest here are whether an automaton is synchronizing and what is the length of shortest reset words if the answer to the first question is `yes'? These questions are also studied from different perspectives such as algorithmic, general statements etc. and in variety of settings, e.g. for particular classes of automata, random settings, \emph{etc.}
The reader is referred to the survey of Volkov~\cite{Vo08} for a brief introduction to the theory of \sa.

One of the most studied direction of research in this field is the long-standing conjecture of \v{C}ern\'{y}, which states that if an automaton is synchronizing, then it admits a reset word of length at most $(n-1)^2$, where $n$ is the number of states of the automaton. This bound is best possible, as shown by \v{C}ern\'{y}. However, despite many efforts, only cubic upper bounds have been obtained so far~\cite{Pin1983,Szykula2017}.

\bigskip

It is the probabilistic settings that interest us in this article. During the attempts to tackle the conjecture of \v{C}ern\'{y}, lots of experiments have been done, showing that random automata seem to be synchronizing with high probability, and  that their reset words seem to be quite small in expectation. This was proved quite recently in a series of articles:
\begin{itemize}
\item Skvortsov and Zaks~\cite{Zaks10} obtained some results for large alphabets (where the number of letters grows with $n$);
\item Berlinkov~\cite{Berl2013RandomAut} proved that the probability that a random automaton is not synchronizing is in $\O(n^{-k/2})$, where $k$ is the number of letters, for any $k\geq 2$ (this bound is tight for $k=2$);
\item Nicaud~\cite{FastSyn} proved that with high probability a random automaton admits a reset
word of length $\O(n\log^3 n)$, for $k\geq 2$ (but with less precise error terms than in~\cite{Berl2013RandomAut}).
\end{itemize}
All these results hold for the \emph{uniform distribution} on the set of deterministic and complete automata with $n$ states on an alphabet of size $k$, where all automata have the same probability. And it is, indeed, the first probability distribution to study. The reader is refered to the survey~\cite{RandomAutSurvey} for more information about random deterministic automata.

\bigskip

In this article we study a distribution on a restricted set of deterministic automata, the \emph{almost-group automata}, which will be defined later in this introduction. In order to motivate our choice, we first need to outline the main features of the uniform distribution on deterministic automata and how they were used in the proofs of the articles cited above.

In a deterministic and complete automaton, one can consider each letter as a map from the set of states $Q$ to itself, which is called its \emph{action}. The action of a given letter in a uniform random automaton is a uniform random mapping from $Q$ to $Q$. Properties of uniform random mappings have been long studied and most of their typical\footnote{In all the informal statements of this article, \emph{typical} means \emph{with high probability} as the size of the object (cardinality of the set, number of states of the automaton, ...) tends to infinity.} statistics are well known. The \emph{functional graph} proved to be a useful tool to describe a mapping; it is the directed graph of vertex set $Q$, built from a mapping $f:Q\rightarrow Q$ by adding an edge $i\rightarrow j$ whenever $j=f(i)$. Such a graph can be decomposed as a set of cycles of trees. Vertices that are in a cycle consists of elements $q\in Q$ such that $f^\ell(q)=q$ for some positive $\ell$. They are called \emph{cyclic vertices}.

The expected number of cyclic vertices in a uniform random mapping on a set of size $n$ is in $\Theta(\sqrt{n})$. This is used in~\cite{FastSyn} and~\cite{Berl2013RandomAut} to obtain the synchronization of most automata. The intuitive idea is that after reading $a^n$, the set of states already shrinks to a much smaller set, in a uniform random automaton; this gives enough leverage, combined with the action of the other letters, to fully synchronize a typical automaton. 
%In~\cite{Berl2013RandomAut}, it is not the number of cyclic vertices but the highest tree of a random mapping which is studied to finalize the proof: it is also a typical property of \mr a\ml uniform random mapping that is used as a cornerstone of the proof.

%\bigskip
\medskip

In a nutshell, uniform random automata are made of uniform random mappings, and each uniform random mapping is already likely to  synchronize most of the states, due to their inherent typical properties. At this point, it seems natural to look for "harder" random 
instances with regard to synchronization, and it was a common question asked when the authors presented their works. 

In this article, to prevent easy synchronization from the separate action of the letter, we propose to study what we call \emph{almost-group automata}, where the action of each letter is a permutation, except for one of them which has only one non-cyclic vertex. An example of such an automaton is depicted on Fig:~\ref{fig:intro}.

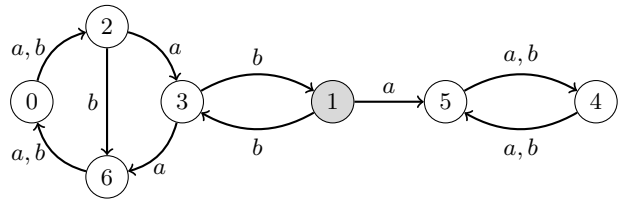
\begin{figure}[h]
\begin{center}
\begin{tikzpicture}
\node[draw,circle] (p0) at (0,0) {0};
\node[draw,circle] (p2) at (1,1) {2};
\node[draw,circle] (p3) at (2,0) {3};
\node[draw,circle] (p6) at (1,-1) {6};

\node[draw,circle,fill=black!15] (p1) at (4,0) {1};
\node[draw,circle] (p5) at (5.5,0) {5};
\node[draw,circle] (p4) at (7.5,0) {4};

\draw[->,thick] (p0) edge[bend left] node[left]{$a,b$} (p2);
\draw[->,thick] (p2) edge[bend left] node[right]{$a$} (p3);
\draw[->,thick] (p3) edge[bend left] node[below]{$a$} (p6);
\draw[->,thick] (p6) edge[bend left] node[left]{$a,b$} (p0);
\draw[->,thick] (p2) edge node[left]{$b$} (p6);
%\draw[->,thick] (p6) edge[loop below] node[below]{$b$} (p6);

\draw[->,thick] (p3) edge[bend left] node[above]{$b$} (p1);
\draw[->,thick] (p1) edge[bend left] node[below]{$b$} (p3);

\draw[->,thick] (p1) edge node[above]{$a$} (p5);
\draw[->,thick] (p5) edge[bend left] node[above]{$a,b$} (p4);
\draw[->,thick] (p4) edge[bend left] node[below]{$a,b$} (p5);

\end{tikzpicture}
\end{center}
\caption{An almost-group automaton with $7$ states. The action of $b$ is a permutation. The action of $a$ is not, as $1$ has no preimage by $a$; but if state $1$ is removed, $a$ acts as a permutation on the remaining states.\label{fig:intro}}
\end{figure}

Since a group automaton with more than one state cannot be synchronizing, almost-group automata can be seen as the automata with  the maximum number of cyclic states (considering all its letters) that can be synchronizing. The question we investigate in this article is the following.

\smallskip
\noindent\textbf{Question: }For the uniform distribution, what is the probability that a strongly connected almost-group automaton is synchronizing?
\smallskip

For this question, we consider automata with $n$ states on a $k$-letter alphabet, with $k\geq 2$, and try to answer asymptotically as $n$ tends to infinity. We prove that such an automaton is synchronizing with probability that tends to $1$. We also provide a precise asymptotic estimation of the probability that it is not synchronizing. In other words, one can state our result as follows: group automata are always non-synchronizing when there are at least two states, but if one allows just one letter to act not bijectively for just one state, then the automaton is synchronizing with high probability. This suggests that from a probabilistic point of view, it is very difficult to achieve non-synchronization.

This article starts with recalling some basic definitions and notations in Section~\ref{sec:basicdefs}.
Then  some interesting properties of this set of automata regarding synchronization are described in Section~\ref{sec:almostgroup}. Finally, we rely on this properties and some elementary counting techniques to establish our result in Section~\ref{sec:counting}.

\section{Basic Definitions and Notations}
\label{sec:basicdefs}
\noindent\textbf{Automata and synchronization.}
Throughout the article, we consider automata on a fixed $k$-letter alphabet $\Sigma=\{a_0,\ldots,a_{k-1}\}$. Since we are only interested in synchronizing properties, we only focus on
the transition structure of automata: we do not specify initial nor final states, and will never actually consider recognized languages in the sequel. From now on a \emph{deterministic
and complete automaton} (DFA) $\A$ on the alphabet $A$ is just a pair $(Q,\cdot)$, where 
$Q$ is a  non-empty finite set of \emph{states} and $\cdot$, the \emph{transition mapping}, is a mapping from $Q\times A$ to $Q$, where the image of $(q,a)\in Q\times A$ is denoted $q\cdot a$. It is inductively extended to a mapping from $Q\times A^*$ to $Q$ by setting
$q\cdot \varepsilon = q$ and $q\cdot ua=(q\cdot u)\cdot a$, for any word $u\in A^*$ and any letter $a\in A$, where $\varepsilon$ denote the empty word.

Let $\A=(Q,\cdot)$ be a DFA. A word $u\in A^*$ is a \emph{synchronizing word} or a \emph{reset word} if for every $q,q'\in Q$, $q\cdot u=q'\cdot u$. An automaton is \emph{synchronizing} if it admits a synchronizing word. A subset of states $S\subseteq Q$ is \emph{synchronized} by a word $u\in A^*$ if $|S\cdot u|=1$.

Observe that if an automaton contains two or more terminal strongly connected components\footnote{A strongly connected component $S$ is terminal when $S\cdot u\subseteq S$ for every $u\in A^*$.}, then it is not synchronizing. Moreover if it has only one terminal strongly connected component $S$, then it is synchronizing if and only if $S$ is synchronized by some word $u$. For this reason, most works on synchronization focus on strongly connected automata, and this paper is no exception.

\smallskip
\noindent\textbf{Almost-group automata.}
Let $\S_n$ be the set of all permutations of $E_n=\{0,\ldots,n-1\}$. A \emph{cyclic point} of a mapping $f$ is an element $x$ such that $f^\ell(x)=x$ for some positive $\ell$. An \emph{almost-permutation} of $E_n$ is a mapping
from $E_n$ to itself with exactly $n-1$ cyclic points; its unique non-cyclic point is called \emph{dangling point}
(or \emph{dangling state} later on, when we use this notion for automata). Equivalently, an almost-permutation is a mapping that acts as a permutation on a subset of size $n-1$ of $E_n$ and that is not a permutation. Let
$\S'_n$ denote the set of almost-permutations on $E_n$.

An \emph{almost-group automaton} is a DFA such that one letter act as an almost-permutation and all others as permutations. An example of such an automaton is given in Fig.~\ref{fig:intro}. For counting reasons, we need to 
normalize the automata, and define $\G_{n,k}$ as the set of all almost-group automata on the alphabet $\{a_0,\ldots,a_{k-1}\}$ whose state set is $E_n$ and such that $a_0$ is the almost-permutation letter.

\smallskip
\noindent\textbf{Probabilities.} In this article, we equip non-empty finite sets with the uniform distribution, where all elements have same probability. The sets under consideration are often sequences of sets, such as $\S_n$; by abuse of notation, we say that a property \emph{hold with high probability} for $\S_n$ when the probability that it holds, which is defined for every $n$, tends to $1$ as $n$ tends to infinity.

\section{Synchronization of Almost-Group Automata}
\label{sec:almostgroup}

In this section we introduce the main tools that we use to describe the structure of synchronizing and of non-synchronizing almost-group automata.

The notion of a \emph{stable pair}, introduced by
Kari~\cite{KariStable02}, has proved to be fruitful mostly by Trahtman, who managed to use it for solving the famous \emph{Road Coloring Problem}~\cite{TRRCP08}. We make use of this definition in our proof as well, along with some ideas coming from~\cite{TRRCP08}.  

A pair of states $\{p,q\}$ is called \emph{stable}, if for every word $u$ there is a word $v$ such that $p\cdot uv=q\cdot uv$. The \emph{stability} relation given by the set of stable pairs joined with a diagonal set $\{(p,p) \mid p \in Q\}$ is invariant under the actions of the letters and complete whenever $\mathcal{A}$ is synchronizing. The definition on pairs is sound as stability is a symmetric binary relation. It is also transitive whence it is an equivalence relation on $Q$ which is a congruence, i.e. invariant under the actions of the letters. 

Notice also, that an automaton is synchronizing if and only if its stability relation is complete, that is, all pairs are stable.
Because of that, if an automaton is not synchronizing and admits a stable pair, then one can consider a non-trivial factorization of the automaton by the stability relation. So, we aim at characterizing stable pairs in a strongly-connected non-synchronizing almost-permutation automaton, in order to show there is a slim chance for such a factorization to appear when switching to probabilities.  

For this purpose, we need the definition of a \emph{deadlock}, which is a pair that cannot be merged into one state by any word (somehow opposite to the notion of stable pair). A subset $S \subseteq Q$ is called an $F$-clique of $\mathcal{A}$ if it is a set of maximum size such that each pair of states from $S$ is a deadlock. It follows from the definition that all $F$-cliques have  same size and that the image of $F$-clique  by a letter or a word is also an $F$-clique.

Let us reformulate~\cite[Lemma~2]{TRRCP08} for our purposes and present a proof  for self-completeness.
\begin{lemma}
\label{lem:f-clique-diff} If $S$ and $T$ are two  $F$-cliques
such that $S \setminus T = \{p\}$ and $T \setminus S= \{q\}$,
for some states $p$ and $q$, then $\{p,q\}$ is a stable pair.
\end{lemma}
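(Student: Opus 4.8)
The plan is to fix an arbitrary word $u$ and exhibit a word $v$ with $p\cdot uv = q\cdot uv$; producing such a $v$ for every $u$ is exactly what stability of $\{p,q\}$ requires. I would write $R = S\cap T$, so that $S = R\cup\{p\}$ and $T = R\cup\{q\}$ with $p,q\notin R$ and $p\neq q$. Applying $u$ and using that the image of an $F$-clique is again an $F$-clique, both $S\cdot u$ and $T\cdot u$ are $F$-cliques, hence of the maximum common size $|S|=|T|=|R|+1$. If it happens that $p\cdot u = q\cdot u$, then $v=\varepsilon$ already works, so from now on I would assume $p\cdot u\neq q\cdot u$.

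The heart of the argument is to manufacture a set of pairwise deadlocks that is ``too large''. I would consider $W = R\cup\{p,q\} = S\cup T$, of size $|R|+2 = |S|+1$, and look at its image $W\cdot u = (S\cdot u)\cup(T\cdot u)$. Because every pair inside the $F$-clique $S$ is a deadlock, no word can merge two of its states, so $u$ is injective on $S$, and likewise on $T$; in particular $p\cdot u\notin R\cdot u$ and $q\cdot u\notin R\cdot u$, which together with the assumption $p\cdot u\neq q\cdot u$ yields $|W\cdot u| = |R|+2 = |S|+1$.

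Now I would invoke maximality: since all $F$-cliques share the same size $|S|$, no set of pairwise deadlocks can have size $|S|+1$. Thus $W\cdot u$ cannot be a set of pairwise deadlocks. Yet every pair of states of $W\cdot u = (R\cdot u)\cup\{p\cdot u,q\cdot u\}$ other than $\{p\cdot u, q\cdot u\}$ lies entirely within $S\cdot u$ or within $T\cdot u$, and is therefore a deadlock. The only possible offending pair is $\{p\cdot u, q\cdot u\}$, which must then fail to be a deadlock: some word $v$ satisfies $(p\cdot u)\cdot v = (q\cdot u)\cdot v$, that is $p\cdot uv = q\cdot uv$, as desired.

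The step I expect to require the most care is the size bookkeeping establishing $|W\cdot u| = |S|+1$: it rests on turning the deadlock property of the pairs inside $S$ and $T$ into injectivity of the action of $u$ on each of them, together with the case distinction $p\cdot u\neq q\cdot u$. Once that is secured, the contradiction with the maximality of $F$-cliques pins the failure precisely on $\{p\cdot u,q\cdot u\}$ and the conclusion is immediate.
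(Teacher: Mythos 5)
Your proof is correct and takes essentially the same route as the paper's: both hinge on observing that if $\{p\cdot u, q\cdot u\}$ were a deadlock, then $(S\cup T)\cdot u$ would be a set of pairwise deadlocks of size $|S|+1$, contradicting the maximality of $F$-cliques. The only difference is presentational — you argue directly for an arbitrary $u$ (and spell out the injectivity bookkeeping and the trivial case $p\cdot u = q\cdot u$), whereas the paper phrases it as a proof by contradiction and leaves those details implicit.
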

\begin{proof}
By contradiction, suppose there is a word $u$ such
that $\{p\cdot u,q\cdot u\}$ is a deadlock. Then $(S \cup T)\cdot u$ is an
$F$-clique because all its pairs are deadlocks. Since $p\cdot u \neq
q\cdot u$, we have   $|S\cup T| = |S| + 1 > |S|$ contradicting maximality of $S$.\qed
\end{proof}

\begin{lemma}
\label{lem:stable_pair}
Each strongly-connected almost-group automaton $\mathcal{A} \in \mathcal{G}_{n,k}$ with at least two states, admits a stable pair containing the dangling state that is synchronized by $a_0$.
\end{lemma}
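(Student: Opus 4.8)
The plan is to apply Lemma~\ref{lem:f-clique-diff}: it suffices to exhibit two $F$-cliques $S$ and $T$ with $S\setminus T=\{d\}$ and $T\setminus S=\{t\}$, where $d$ is the dangling state and $t$ is the unique cyclic state with $t\cdot a_0=d\cdot a_0$. Indeed, since $a_0$ is an almost-permutation, its image is exactly $E_n\setminus\{d\}$, so $d$ is the only state with no $a_0$-preimage, and the state $s:=d\cdot a_0$ is the only one with two $a_0$-preimages, namely $d$ and a single cyclic state $t\neq d$. Hence $\{d,t\}$ is the unique pair synchronized by $a_0$, so it is the only possible candidate for the pair in the statement, and once we know it is stable we are done.

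First I would produce an $F$-clique $S$ containing $d$. Starting from any $F$-clique and using strong connectivity to send one of its states to $d$, the image is again an $F$-clique (as recalled just before the lemma), so such an $S$ exists. Since $S\cdot a_0$ is an $F$-clique of the same size as $S$, the letter $a_0$ acts injectively on $S$; as $d$ and $t$ share the image $s$, this forces $t\notin S$.

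The key step is to realise the desired second $F$-clique $T$ as an image of $S$, which makes it an $F$-clique automatically. Let $L\geq 1$ be such that $a_0^L$ restricts to the identity on the $n-1$ cyclic states (for instance $L$ the order of the permutation induced by $a_0$ on its cyclic points). Every state of $S\setminus\{d\}$ is cyclic, hence fixed by $a_0^L$, whereas $d\cdot a_0^L=s\cdot a_0^{L-1}$ is the cyclic state sent to $s$ by $a_0$, that is, exactly $t$. Therefore $S\cdot a_0^L=(S\setminus\{d\})\cup\{t\}=:T$, and $T$ is an $F$-clique as the image of $S$ under the word $a_0^L$. Since $d$ is non-cyclic we have $d\notin T$, and $t\notin S$, so $S\setminus T=\{d\}$ and $T\setminus S=\{t\}$; Lemma~\ref{lem:f-clique-diff} then yields that $\{d,t\}$ is stable, and it is synchronized by $a_0$. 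This argument is uniform: when $\mathcal{A}$ is synchronizing the $F$-cliques are singletons and the same computation gives $S=\{d\}$ and $T=\{t\}$.

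The step I expected to be the main obstacle is precisely showing that the swapped set $T=(S\setminus\{d\})\cup\{t\}$ is an $F$-clique. A direct attempt to prove that each pair $\{t,z\}$ with $z\in S\setminus\{d\}$ is a deadlock runs into the fact that the deadlock property does not obviously pull back through the non-injective letter $a_0$, since one cannot commute $a_0$ past the permutation letters. The cycle-structure observation above circumvents this entirely by exhibiting $T$ as $S\cdot a_0^L$, turning the swap $d\mapsto t$ into an application of a word, for which preservation of the $F$-clique property is already available.
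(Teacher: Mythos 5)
Your proof is correct and follows essentially the same route as the paper's: obtain an $F$-clique containing the dangling state via strong connectivity, push it through a power of $a_0$ that fixes all cyclic states so that only the dangling state moves (to its cyclic "twin" $t$), and apply Lemma~\ref{lem:f-clique-diff} to the two cliques. The only cosmetic differences are that you treat the synchronizing case uniformly rather than dispatching it upfront, and you make explicit (via injectivity of $a_0$ on an $F$-clique) why $t\notin S$, a point the paper leaves implicit.
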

\begin{proof}
If $\mathcal{A}$ is synchronizing, then we are done because all pairs are stable. In the opposite case, there must be an $F$-clique $F_1$ of size at least two. 

Let $p_0$ be the dangling state (which is not permuted by $a_0$) and let $d$ be the product of all cycle lengths of $a_0$. Since $\mathcal{A}$ is strongly-connected there is a word $u$ such that $p_0 \in F_1\cdot u$. By the property of $F$-cliques, $F_2 = F_1\cdot u$ and $F_3 = F_2\cdot a_0^{d}$ are $F$-cliques too. Notice that $p_0$ is the only state which does not belong to the cycles of $a_0$ and all the cycle states remains intact under the action $a_0^d$, by construction of $d$. Hence $F_2 \setminus F_3 = \{p_0\}$ and $F_3 \setminus F_2 = \{p_0\cdot a_0^d\}$. Hence,
by Lemma~\ref{lem:f-clique-diff},  $\{p_0,p_0\cdot a_0^d\}$ is a stable pair. This concludes the proof since $p_0\cdot a_0 = p_0\cdot a_0^{d+1}$.\qed
\end{proof}

To characterize elements of $\mathcal{G}_{n,k}$ that are not synchronizing, we build their \emph{factor automata}, which is defined as follows. Let $\A$ be a DFA with stability relation $\rho$. Let $\C=\{C_1$,\ldots, $C_\ell\}$ denote its classes for $\rho$. The \emph{factor automaton}
of $\A$, denoted by $\mathcal{A} / \rho$, is the automaton of set of states $\C$ with transition function defined by $C_i\cdot a = C_j$ in $\mathcal{A} / \rho$ if and only if $C_i\cdot a \subseteq C_j$ in $\A$. Or equivalently, if and only if there exists $q\in C_i$ such that $q\cdot a\in C_j$ in $\A$.

\begin{lemma}
\label{lem:factor_automaton}
If $\A\in\mathcal{G}_{n,k}$ is strongly-connected, then its factor automaton $\mathcal{A} / \rho$ is a strongly-connected permutation automaton.
\end{lemma}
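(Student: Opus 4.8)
The plan is to establish the two claimed properties separately. Strong connectivity will follow from the general fact that a quotient of a strongly connected automaton is strongly connected, while the permutation property is the substantial part and is where Lemma~\ref{lem:stable_pair} enters. Throughout I write $[q]$ for the $\rho$-class of a state $q$; since $\rho$ is a congruence, each letter $a$ induces a well-defined self-map $\bar a$ of the class set $\C$, namely $\bar a([q]) = [q\cdot a]$, and this is precisely the transition function of $\A/\rho$.

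For strong connectivity, I would argue directly: given two classes $C_i$ and $C_j$, choose representatives $p\in C_i$ and $q\in C_j$. Strong connectivity of $\A$ provides a word $w$ with $p\cdot w = q$; reading $w$ from $C_i$ in $\A/\rho$ then leads to $[p\cdot w]=[q]=C_j$. Hence any class is reachable from any other, so $\A/\rho$ is strongly connected.

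For the permutation property, since $\C$ is finite it suffices to show that each $\bar a$ is surjective (a surjective self-map of a finite set is a bijection). For the permutation letters $a_1,\dots,a_{k-1}$ this is immediate, because each acts as a bijection of $Q$: for any class $C_j$ and any $r\in C_j$ there is a state $s$ with $s\cdot a_i=r$, whence $\bar a_i([s])=C_j$. The only genuine work concerns $a_0$. Because $a_0$ is an almost-permutation whose dangling state $p_0$ is the unique state with no preimage, the image of $a_0$ on $Q$ is exactly $E_n\setminus\{p_0\}$; consequently the image of $\bar a_0$ is $\{[t] : t\neq p_0\}$, which exhausts $\C$ if and only if the class $[p_0]$ contains some state other than $p_0$.

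This last point is exactly what Lemma~\ref{lem:stable_pair} supplies: it exhibits the stable pair $\{p_0,\,p_0\cdot a_0^{d}\}$, so $p_0\cdot a_0^{d}\in[p_0]$, and since $p_0$ is non-cyclic we have $p_0\cdot a_0^{d}\neq p_0$. Thus $[p_0]$ is not a singleton, $\bar a_0$ is surjective, and therefore a permutation of $\C$. The main obstacle is precisely this letter $a_0$: unlike the others it is not surjective on $Q$, so a priori the quotient could fail to be a permutation automaton; the heart of the argument is that the single defect of $a_0$, the missing preimage of $p_0$, is repaired at the level of classes by the merge of $p_0$ with a cyclic state guaranteed by Lemma~\ref{lem:stable_pair}.
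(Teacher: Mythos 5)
Your proof is correct and takes essentially the same route as the paper's: both arguments reduce the permutation property to surjectivity of the induced maps on stable classes and invoke Lemma~\ref{lem:stable_pair} to show that the class of the dangling state $p_0$ is not a singleton, so the one surjectivity defect of $a_0$ disappears in the quotient. The only difference is presentational --- the paper argues by contradiction (a non-permutation letter would yield a class with no incoming transitions, which could only be $\{p_0\}$), while you verify surjectivity of each induced map directly.
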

\begin{proof}
Strong-connectivity follows directly from the definition.
If one of the letters was not a permutation on the factor automaton, then there would be a stable class $S$ in $\mathcal{A}$ which has no incoming transition by this letter. It would follow that there is no incoming transition to every state of $S$ in $\mathcal{A}$ either. However, this may happen only for the letter $a_0$ and the (unique) dangling state $p_0$ by this letter. Due to Lemma~\ref{lem:stable_pair}, the dangling state $p_0$ must belong to a stable pair whence there is another state in $S$: this contradicts that $p_0$ is the only state with no incoming transition by $a_0$.\qed
\end{proof}

\begin{lemma}
\label{lem:size_of_components}
Let $\A\in\mathcal{G}_{n,k}$ and let $D$ be the stable class of $\A$ that contains the dangling state $p_0$. Then the set of stable classes can be divided into two disjoint, but possibly empty, subsets $\mathcal{B}$ and $\mathcal{S}$ such that 
\begin{itemize}
\item[$\bullet$] $D \in \mathcal{B}$ and $|B|=|D|$ for every $B \in \mathcal{B}$;
\item[$\bullet$]  $|S|=|D|-1$ for every $S \in \mathcal{S}$;
\item[$\bullet$] The  $a_0$-cycle of  $\mathcal{A} / \rho$ that contains $D$ only contains elements of $\S$ besides $D$;
% There is a unique cycle in $\mathcal{A} / \rho$ by $a_0$ which consists of $D$ and a (possibly empty) subset of classes from $\mathcal{S}$;
\item[$\bullet$]  Every other cycle in $\mathcal{A} / \rho$ lies entirely in  either $\mathcal{B}$ or $\mathcal{S}$.
\end{itemize}
\end{lemma}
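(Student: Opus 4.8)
The plan is to control, for each stable class $X$, its cardinality $|X|$, and to track how this quantity changes along the edges of the factor automaton $\A/\rho$, which by Lemma~\ref{lem:factor_automaton} is a strongly-connected permutation automaton. For a class $Y$ I write $\mathrm{pred}(Y)=Y\cdot a_0^{-1}$ for its predecessor under the (well-defined) $a_0$-permutation of $\A/\rho$. The whole statement will reduce to a single local computation of these cardinalities, after which strong-connectivity supplies the global dichotomy.

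\emph{Local size relations.} By Lemma~\ref{lem:stable_pair} the class $D$ contains both $p_0$ and $p_0\cdot a_0^d$; these are distinct (one is non-cyclic, the other cyclic) and are both sent to $p_0\cdot a_0$ by $a_0$. Since $a_0$ is injective on the cyclic points and $p_0$ is its only non-cyclic point, this is the only identification $a_0$ can make inside a class, so $|D\cdot a_0|=|D|-1$ whereas $|X\cdot a_0|=|X|$ for every class $X\neq D$. Dually, $p_0$ is the only state without an $a_0$-preimage, so every $y\in Y\setminus\{p_0\}$ has its $a_0$-preimage inside $\mathrm{pred}(Y)$, giving $Y\setminus\{p_0\}\subseteq\mathrm{pred}(Y)\cdot a_0\subseteq Y$. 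Combining the two: for $Y\neq D$ one gets $\mathrm{pred}(Y)\cdot a_0=Y$, hence $|Y|=|\mathrm{pred}(Y)|$ unless $\mathrm{pred}(Y)=D$, in which case $|Y|=|D|-1$; and $\mathrm{pred}(D)\cdot a_0=D\setminus\{p_0\}$, which forces $|\mathrm{pred}(D)|=|D|-1$ whenever $\mathrm{pred}(D)\neq D$. For the permutation letters $a_i$ with $i\geq1$ the same preimage argument, now with no missing preimage, yields $|X\cdot a_i|=|X|$ for all $X$, so these letters preserve class sizes.

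\emph{From local to global.} The previous paragraph says that every edge $X\xrightarrow{a}Y$ of $\A/\rho$ satisfies $|Y|=|X|$, with exactly two exceptions, both $a_0$-edges incident to $D$ and occurring precisely when $\mathrm{pred}(D)\neq D$: the edge $D\to D\cdot a_0$ lowers the size by one, and $\mathrm{pred}(D)\to D$ raises it by one. This immediately gives the last two bullets: along the $a_0$-cycle through $D$, the successor of $D$ has size $|D|-1$ and each subsequent step preserves the size, so every class of that cycle other than $D$ has size $|D|-1$; and any $a_0$-cycle avoiding $D$ uses only size-preserving edges, so is of constant size. To see that only the sizes $|D|$ and $|D|-1$ occur, let $\mathcal M$ be the set of classes of maximal size $M$: an edge entering $\mathcal M$ from a strictly smaller class would have to be size-increasing, i.e.\ the edge $\mathrm{pred}(D)\to D$ with target $D$ of size $|D|$; if $M>|D|$ this is impossible, so $\mathcal M$ (nonempty and, as $D\notin\mathcal M$, proper) has no incoming edge from outside, contradicting strong-connectivity. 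Hence $M=|D|$, and a symmetric argument on the minimal size shows that no class is smaller than $|D|-1$. Setting $\B$ to be the classes of size $|D|$ (which contains $D$) and $\S$ those of size $|D|-1$ then yields all four assertions, with $\S$ empty exactly when $\mathrm{pred}(D)=D$.

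The step I expect to be delicate is this passage from the local size behaviour to the global two-value dichotomy. It hinges on two things being exactly right: the precise local description — that the \emph{only} non-size-preserving transitions are the two $a_0$-edges at $D$, which in turn rests on $p_0$ being simultaneously the unique point with no $a_0$-preimage and a member of the stable pair of Lemma~\ref{lem:stable_pair} — and the min/max-cut argument built on the strong-connectivity of $\A/\rho$. The degenerate case $\mathrm{pred}(D)=D$, where $D$ maps into itself, $\S$ collapses to $\emptyset$ and all classes share the size $|D|$, is the one I would check most carefully, since there the two exceptional edges disappear.
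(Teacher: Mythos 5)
Your proof is correct, and it shares the paper's overall skeleton: pass to the factor automaton $\A/\rho$, which by Lemma~\ref{lem:factor_automaton} is a strongly-connected permutation automaton, track the cardinalities of the stable classes, and let strong connectivity force the global dichotomy. The mechanics differ in both halves, however. Locally, the paper never establishes your exact equalities $|D\cdot a_0|=|D|-1$ and $|X\cdot a_0|=|X|$, and in particular never invokes Lemma~\ref{lem:stable_pair} inside this proof: it only uses the one-sided inequality $|C_j|\leq|C_{j+1}|$ coming from injectivity of each letter off $p_0$, and closes it into equality by going around each cycle, $|C_0|\leq|C_1|\leq\cdots\leq|C_{r-1}|\leq|C_0|$. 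Your per-edge description is sharper but needs more: to get the exact one-point drop at $D$ you must know that the unique pair merged by $a_0$ lies entirely inside $D$, which is precisely what Lemma~\ref{lem:stable_pair} supplies (without it, the merge partner of $p_0$ could a priori sit in another class and $|D\cdot a_0|$ could equal $|D|$). This extra precision is what lets you assert that the two edges at $D$ are the \emph{only} non-size-preserving transitions, and to characterize $\S=\emptyset$ by $\mathrm{pred}(D)=D$. Globally, the paper compresses the final step into ``the lemma follows from strong-connectivity,'' implicitly a path-following argument in which sizes can only toggle between $|D|$ and $|D|-1$ when crossing the exceptional edges; your max/min cut argument is a different, fully explicit way to finish, and it is the main added value of your write-up, since it makes airtight exactly the step the paper leaves to the reader. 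One cosmetic remark: both your proof and the paper's tacitly assume $\A$ strongly connected (needed already for Lemma~\ref{lem:factor_automaton}), even though the statement of the lemma only says $\A\in\G_{n,k}$; this is an imprecision inherited from the paper, not a flaw of your argument.
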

\begin{proof}
Since stable pairs are mapped to stable pairs, the image of a stable class by any letter must be included in a stable class. 
Recall that by Lemma~\ref{lem:factor_automaton} all letters in $\mathcal{A} / \rho$ act as permutations on the stable classes. Our proof consists in examining the different cycles of the group automaton $\A/\rho$. Let us consider any cycle of a letter $a$ in  $\mathcal{A} / \rho$, made of the stable classes $C_0, C_1, \dots, C_{r-1}$ with $C_j\cdot a \subseteq C_{j+1 \pmod{r}}$, for any $j\in\{0,\ldots r-1\}$.

If $a\neq a_0$ then the letter $a$ acts as a permutation in $\A$, and for each $j$, we have $|C_j| \leq |C_{j+1 \pmod{r}}|$, since $a$ does not merge  pairs of states. Therefore,
\[
|C_0| \leq |C_1| \dots \leq |C_{r-1}| \leq |C_0|.
\]
As a direct consequence, all $|C_j|$  have  same cardinality.

If $a = a_0$, then observe that the same argument can be used when one removes the dangling state $p_0$ and its outgoing transition by $a_0$: the action of $a_0$ on $Q\setminus\{p_0\}$ becomes a well-defined permutation. Henceforth, if this cycle does not degenerate to a simple loop consisting of only $D$, then all the other elements of the cycle are stable classes of size $|D|-1$. And this is the only place where changes of size may happen in $\A/\rho$. The lemma follows from the strong-connectivity of $\mathcal{A} / \rho$. \qed
\end{proof}

Notice that  an almost-group automaton is non-synchronizing if and only if it has at least two stable classes. The following theorem is a consequence of this fact and of Lemma~\ref{lem:size_of_components}.
\begin{theorem}
\label{thm:non-synch-criterion}
A strongly-connected almost-group automaton $\mathcal{A}$ is non-synchro\-nizing if and only if its partitioning described in Lemma~\ref{lem:size_of_components} is such that $|\mathcal{B} \cup \mathcal{S}|>1$.
\end{theorem}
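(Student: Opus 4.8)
The plan is to recognize this theorem as a repackaging of the characterization of synchronization through the stability relation, phrased in the language of the decomposition from Lemma~\ref{lem:size_of_components}. The only genuine input is the equivalence ``$\A$ is synchronizing if and only if its stability relation is complete'', which was already recorded in the excerpt; everything else is bookkeeping about the partition $\mathcal{B}\cup\mathcal{S}$.

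First I would pin down what $|\mathcal{B}\cup\mathcal{S}|$ counts. By the construction in Lemma~\ref{lem:size_of_components}, the full set of stable classes of $\A$ (equivalently, the state set of the factor automaton $\A/\rho$) is split into the two disjoint subsets $\mathcal{B}$ and $\mathcal{S}$. Hence $\mathcal{B}\cup\mathcal{S}$ is precisely the set of stable classes and $|\mathcal{B}\cup\mathcal{S}|$ is their number. In particular, the class $D$ of the dangling state lies in $\mathcal{B}$, so $\mathcal{B}\cup\mathcal{S}$ is never empty and $|\mathcal{B}\cup\mathcal{S}|\geq 1$ always holds.

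Next I would invoke the equivalence between synchronization and completeness of the stability relation. Since the stable classes are exactly the classes of the equivalence relation $\rho$, the relation is complete (all pairs stable) precisely when there is a single class, i.e.\ when $|\mathcal{B}\cup\mathcal{S}|=1$. Running both directions: if $\A$ is synchronizing then $\rho$ is complete, so there is exactly one stable class and $|\mathcal{B}\cup\mathcal{S}|=1$; conversely, if $|\mathcal{B}\cup\mathcal{S}|>1$ there are two distinct stable classes, so some pair $\{p,q\}$ lies in different classes and is therefore not stable, whence $\rho$ is not complete and $\A$ is non-synchronizing. Combining these observations gives the stated equivalence.

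Strictly speaking there is no real obstacle here, since the theorem follows formally from facts already granted in the excerpt. The one place where actual work is hidden is the backward direction of the underlying equivalence ``completeness implies synchronizing'', which rests on the standard merging argument: starting from an image $Q\cdot w$ of size greater than one, pick two of its states, use stability (with $u=\varepsilon$) to find a word $v$ collapsing them, and observe that $|Q\cdot wv|<|Q\cdot w|$; iterating drives the image down to a single state. As this equivalence is already assumed, the theorem follows with no further difficulty.
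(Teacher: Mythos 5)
Your proof is correct and follows exactly the paper's route: the paper disposes of this theorem with the one-line observation that an almost-group automaton is non-synchronizing if and only if it has at least two stable classes, combined with the fact that $\mathcal{B}\cup\mathcal{S}$ from Lemma~\ref{lem:size_of_components} is precisely the set of stable classes. Your write-up simply makes explicit the bookkeeping (and the standard merging argument behind ``stability relation complete implies synchronizing'') that the paper leaves implicit.
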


\section{Counting Non-synchronizing Almost-Group automata}
\label{sec:counting}
In this section, we use counting arguments to establish our main result: a precise estimation
of the asymptotic number  of strongly connected almost-group automata that are not synchronizing.

Recall that our working alphabet is $\Sigma=\{a_0,\ldots,a_{k-1}\}$, that $E_n=\{0,\ldots,n-1\}$ and that $\G_{n,k}$ is the set of almost-group automata on $\Sigma$ with set of states $E_n$. Our first counting lemma is immediate.
\begin{lemma}\label{lem:ag automata}
For any $n\geq 1$, there are exactly $(n-1) n!$ almost-permutations of $E_n$. The number of elements of $\G_{n,k}$ is therefore equal to $(n-1)n!^k$.
\end{lemma}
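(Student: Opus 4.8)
The plan is to count almost-permutations by a bijective decomposition and then multiply the count over the alphabet. First I would describe an almost-permutation $f$ of $E_n$ by the triple consisting of its dangling point $p_0$, the restriction of $f$ to the remaining $n-1$ points, and the single value $f(p_0)$. The key observation is that this data determines $f$ completely and, conversely, that every such triple yields a valid almost-permutation; hence the decomposition is a bijection and the count reduces to multiplying three independent choices.

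To justify that the decomposition is well defined, I would argue that if $f$ is an almost-permutation with dangling point $p_0$ and $S=E_n\setminus\{p_0\}$ denotes its set of cyclic points, then $f$ restricts to a permutation of $S$. Indeed, the image of a cyclic point is again cyclic, since $f^\ell(x)=x$ implies $f^\ell(f(x))=f(x)$, so $f(S)\subseteq S$; and $f|_S$ is injective because each cyclic point has a unique predecessor on its cycle, which forces $f|_S$ to be a bijection of the finite set $S$. Moreover $f(p_0)\in S$, as it cannot equal $p_0$ without making $p_0$ a fixed point and hence cyclic, contradicting that $p_0$ is the dangling point.

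Conversely, given any point $p_0\in E_n$, any permutation $\sigma$ of $S=E_n\setminus\{p_0\}$, and any target $t\in S$, the map sending each $x\in S$ to $\sigma(x)$ and $p_0$ to $t$ is an almost-permutation with dangling point $p_0$: every element of $S$ remains cyclic, while $p_0$ never returns to itself because its orbit immediately enters $S$ and stays there. This confirms the bijection, and counting the three independent choices gives $n\cdot(n-1)!\cdot(n-1)=(n-1)\,n!$ almost-permutations of $E_n$.

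Finally, for $\G_{n,k}$ the letter $a_0$ must act as an almost-permutation while each of the remaining $k-1$ letters acts as an arbitrary permutation of $E_n$, and these choices are independent; multiplying the $(n-1)n!$ possibilities for $a_0$ by the $n!$ possibilities for each of the other letters yields $(n-1)n!\cdot(n!)^{k-1}=(n-1)\,n!^k$. The only mildly delicate point is verifying that $f|_S$ is genuinely a permutation, i.e.\ the injectivity of $f$ on cyclic points; but this is elementary, and the remainder is a routine product count.
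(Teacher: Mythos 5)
Your proof is correct and takes essentially the same approach as the paper: the same bijective decomposition of an almost-permutation into its distinguished point, a permutation of the remaining $n-1$ elements, and the image of that point, giving $n\cdot(n-1)!\cdot(n-1)=(n-1)n!$, followed by the same independent product over the other $k-1$ letters. You simply verify the bijection (injectivity on cyclic points and the converse construction) in more detail than the paper, which states the characterization without proof.
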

\begin{proof}
An almost-permutation of $E_n$ is characterized by its element with no preimage $x_0$, the way it permutes $E_n\setminus\{x_0\}$ and the image of $x_0$ in $E_n\setminus\{x_0\}$. Since
there are $n$ choices for $x_0$, $(n-1)!$ ways to permute the other elements and $n-1$ choices for the image of $x_0$, the result follows.\qed
\end{proof}

\subsection{Strong-Connectivity}

%Our computations below focus on strong-connectivity. We shall need an estimation of the number of strongly connected group automata. This is given in Corollary~\ref{cor:sc_group_automata}, using the more precise result of Lemma~\ref{lem:sc_group_automata}, which we establish now.
%\mr
Our computations below focus on strong-connectivity. We shall need an estimation of the number of strongly connected group automata and almost-group automata. These results are given in 
Lemma~\ref{lem:sc_group_automata} and \ref{lem:sc_automata}. The proofs of these lemmas are kind of folklore, so we moved them into Appendix section~\ref{sec:appendix} to fit into a space limit.%\ml

\begin{restatable}{lemma}{primelemma}
\label{lem:sc_group_automata}
There are at most $n(n-1)!^k(1+o(1))$ group automata with set of states $E_n$ that are not strongly-connected. Henceforth, there are $n!^k(1+o(n^{1-k}))$ strongly-connected group automata. %\mr\ml
\end{restatable}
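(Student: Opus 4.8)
The plan is to translate strong connectivity into a transitivity statement about permutation groups and then to count intransitive tuples of permutations. A group automaton on $E_n$ is a $k$-tuple $(\sigma_0,\dots,\sigma_{k-1})$ of permutations, and since each $\sigma_i$ has finite order, the monoid they generate is in fact a group; hence the automaton is strongly connected if and only if this group acts transitively on $E_n$, and it fails to be strongly connected exactly when some proper nonempty subset $S\subsetneq E_n$ is invariant under every $\sigma_i$. As there are $n!^k$ group automata in total, it suffices to bound the number of intransitive tuples by $n(n-1)!^k(1+o(1))$, and then subtract.

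First I would isolate the dominant contribution. If the tuple has a common fixed point $i$, then the remaining $n-1$ states are permuted arbitrarily by each letter, giving $((n-1)!)^k$ tuples for each of the $n$ choices of $i$; a crude union bound over $i$ shows that the number of tuples with a common fixed point is at most $n(n-1)!^k$. Every remaining intransitive tuple has all group-orbits of size at least $2$, hence possesses an invariant subset $S$ with $2\le|S|\le n-2$.

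For this second family I would union-bound over the size $m=|S|$ and over the choice of $S$. For a fixed $S$ of size $m$, every letter must preserve both $S$ and its complement, contributing $(m!\,(n-m)!)^k$ tuples, so the count of tuples admitting an invariant set of size $m$ is at most $t_m:=\binom{n}{m}(m!\,(n-m)!)^k = n!\,(m!\,(n-m)!)^{k-1}$. The key estimate is then $\sum_{m=2}^{n-2}t_m$. The ratio $t_{m+1}/t_m=((m+1)/(n-m))^{k-1}$ is far below $1$ for small $m$, so the terms decay rapidly away from the two ends $m=2$ and $m=n-2$; by the symmetry $m\leftrightarrow n-m$ the whole sum is of the order of its extreme term $t_2=2^{k-1}n!\,((n-2)!)^{k-1}$. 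Comparing with the main term gives $t_2 = n(n-1)!^k\,(2/(n-1))^{k-1}=\O\!\left(n(n-1)!^k\,n^{1-k}\right)$, which is $o(n(n-1)!^k)$ for every $k\ge 2$. Adding the two contributions yields the bound $n(n-1)!^k(1+o(1))$ on the number of non-strongly-connected group automata, and subtracting from the total $n!^k=n^{k-1}\cdot n(n-1)!^k$ gives the stated estimate for the strongly connected ones.

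I expect the main obstacle to be the control of $\sum_{m=2}^{n-2}t_m$: a naive bound by (number of terms)$\times$(largest term) loses a factor $n$ and fails to beat the main term when $k=2$. The careful point is to exploit the rapid decay of $t_m$ from both ends — concretely, to split the range at $m=n/4$ and $m=3n/4$, bound the two outer parts by geometric series of ratio bounded away from $1$ (since $t_{m+1}/t_m\le((m+1)/(n-m))^{k-1}\le(1/3)^{k-1}$ there), and observe that the central terms are super-exponentially small — so that the sum is genuinely $\O(t_2)$ rather than $\O(n\,t_2)$.
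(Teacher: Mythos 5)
Your proof is correct and follows essentially the same route as the paper: both arguments reduce non-strong-connectivity to the existence of an invariant proper nonempty subset (the paper phrases this as a partition $Q_1\dot{\cup}\,Q_2$ with no transitions between the parts), both use the union bound $\binom{n}{m}(m!\,(n-m)!)^k = n!\,(m!\,(n-m)!)^{k-1}$ over the size $m$ of that subset, both identify the singleton/fixed-point term $n(n-1)!^k$ as dominant, and both show the remaining terms contribute only $\O\bigl(n!\,(n-2)!^{k-1}\bigr)=o\bigl(n(n-1)!^k\bigr)$. The only difference is bookkeeping in the tail estimate: the paper restricts to $|Q_1|\le|Q_2|$ (so the sum stops at $n/2$), pulls out the $r=1,2$ terms, and bounds the rest crudely by the number of terms times the $r=3$ term, whereas you sum over all $2\le m\le n-2$ and therefore need the geometric-decay/central-terms argument you describe --- a valid, slightly more careful variant of the same estimate.
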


%We now consider almost-group automata. 
\begin{restatable}{lemma}{secprimelemma}
\label{lem:sc_automata}
The number of not strongly-connected almost-group automata is at most $2(n-1)n(n-1)!^{k}(1+o(1))$. Henceforth, almost-group automata are strongly connected with high probability:
there are $(n-1)n!^k(1+o(n^{1-k}))$ strongly connected elements in $\G_{n,k}$.%\mr\ml
\end{restatable}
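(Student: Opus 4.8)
The plan is to enumerate the non-strongly-connected automata of $\G_{n,k}$ and show that they are dominated by a single family. Recall that a complete DFA fails to be strongly connected exactly when it possesses a proper nonempty \emph{absorbing} set $S$, i.e.\ a set with $S\cdot a\subseteq S$ for every letter $a$; its terminal strongly connected component is always such a set. A permutation letter preserving $S$ must in fact permute both $S$ and $E_n\setminus S$, so the only source of asymmetry is $a_0$. An elementary count, splitting according to whether the dangling state $x_0$ of $a_0$ falls inside or outside $S$, shows that for a \emph{fixed} set $S$ of size $m$ the number of automata of $\G_{n,k}$ making $S$ absorbing is $N(m)=(n+m-2)\,(m!\,(n-m)!)^{k}$: the factor $(m!(n-m)!)^{k-1}$ accounts for the $k-1$ permutation letters, while $a_0$ contributes $(m-1)\,m!\,(n-m)!$ when $x_0\in S$ and $(n-1)\,m!\,(n-m)!$ when $x_0\notin S$.

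The main term comes from the largest proper absorbing sets. Let $V\subseteq\G_{n,k}$ collect the automata for which some $C=E_n\setminus\{x\}$ is absorbing, i.e.\ some state $x$ receives no incoming transition from the other states. Summing $N(n-1)=(2n-3)(n-1)!^{k}$ over the $n$ choices of $x$ gives the union bound $|V|\le n(2n-3)(n-1)!^{k}=2(n-1)n(n-1)!^{k}(1+o(1))$. The constant $2$ is explained by the two sub-cases of the count for $a_0$, which are asymptotically equinumerous: either the detached state $x$ is itself the dangling state $x_0$ (the term $(n-1)$), or $x\ne x_0$ is an isolated common fixed point while $a_0(x_0)\ne x$ (the term $(n-2)$). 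A naive union bound over \emph{all} absorbing-set sizes would instead double count these isolated singletons (they also yield the size-one absorbing set $\{x\}$) and spuriously produce the constant $3$, so it is important to take $V$ as the main term.

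It then remains to show that every non-strongly-connected automaton outside $V$ is of strictly smaller order, namely $o\big((n-1)n(n-1)!^{k}\big)$. Such an automaton has no absorbing set of size $n-1$, hence a terminal strongly connected component $T$ with $1\le|T|\le n-2$. When $|T|=1$, the single sink state $x$ is genuinely reachable from its complement, which forces $a_0(x_0)=x$ with $x_0\ne x$; a direct count bounds these automata by $n(n-1)!^{k}$. When $2\le|T|\le n-2$, I would bound their number by $\sum_{m=2}^{n-2}\binom{n}{m}N(m)=\sum_{m=2}^{n-2} n!\,(n+m-2)\,(m!(n-m)!)^{k-1}$. I expect this tail estimate to be the main obstacle: the point is that $(m!(n-m)!)^{k-1}$ is minimised at the endpoints of $[2,n-2]$ and that consecutive terms shrink by a factor of order $(m/n)^{k-1}$ near $m=2$ (symmetrically near $m=n-2$), so the whole sum is governed by $m=2$ and $m=n-2$ and is $O\big(n^{2}(n-1)!^{k}/n^{k-1}\big)=o\big((n-1)n(n-1)!^{k}\big)$ for $k\ge 2$.

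Putting the pieces together, the number of non-strongly-connected elements of $\G_{n,k}$ equals $|V|+o\big((n-1)n(n-1)!^{k}\big)=2(n-1)n(n-1)!^{k}(1+o(1))$, which is the claimed bound. Subtracting from the total $(n-1)n!^{k}$ of Lemma~\ref{lem:ag automata} and observing that $2(n-1)n(n-1)!^{k}/\big((n-1)n!^{k}\big)=2n^{1-k}\to 0$, it follows that a uniform random almost-group automaton is strongly connected with high probability, with the stated asymptotic number of strongly connected elements.
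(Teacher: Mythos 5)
Your proof is correct, and it rests on the same two structural observations as the paper's: a permutation letter preserving a subset must permute both the subset and its complement, and the only possible transition into an absorbing set from outside it is the $a_0$-image of the dangling state. The bookkeeping, however, is organized differently. The paper splits a non-strongly-connected automaton along a partition $(Q_1,Q_2)$ with no $Q_2\to Q_1$ transition and distinguishes whether some $Q_1\to Q_2$ transition exists (if so, it is forced to be $p_0\xrightarrow{a_0}Q_2$; if not, the automaton is a disjoint union of an almost-group automaton and a group automaton); this yields the single bound $(n-1)\sum_{r=1}^{n-1}\binom{n}{r}\bigl(r!(n-r)!\bigr)^k$, which is evaluated by reusing the quantity $Z_{n,k}$ from the proof of Lemma~\ref{lem:sc_group_automata}. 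You instead compute an exact per-set count $N(m)=(n+m-2)\bigl(m!(n-m)!\bigr)^k$ of automata making a fixed size-$m$ set absorbing, isolate the dominant family $V$ of automata with a co-singleton absorbing set, and dispose of the rest via terminal strongly connected components: your sink case with $a_0(x_0)=x$, bounded by $n(n-1)!^k$, is exactly the paper's term $r=n-1$ in its first case, and your tail over $2\le m\le n-2$ corresponds to the paper's middle range of $r$. The paper's route is shorter because the tail estimate is outsourced to $Z_{n,k}$; yours is more transparent about where the constant $2$ comes from, and your remark that a naive union bound over \emph{all} absorbing-set sizes would produce the constant $3$, by double-counting split automata that have a sink, is a genuine point which the paper silently avoids by requiring the dangling state to lie in $Q_1$ in its second case.

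Two small blemishes, neither fatal. First, in the tail estimate you write that $\bigl(m!(n-m)!\bigr)^{k-1}$ is \emph{minimised} at the endpoints of $[2,n-2]$; it is \emph{maximised} there, since it equals $\bigl(n!/\binom{n}{m}\bigr)^{k-1}$ and the binomial coefficient is smallest at the endpoints --- which is precisely why those two terms dominate, as your subsequent ratio argument correctly exploits. Second, your final subtraction (exactly like the paper's own proof) yields $(n-1)n!^k\bigl(1+O(n^{1-k})\bigr)$ strongly connected automata rather than the stated $1+o(n^{1-k})$: the number of non-strongly-connected elements of $\G_{n,k}$ is genuinely of order $n^2(n-1)!^k$, so the relative error is $\Theta(n^{1-k})$. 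This imprecision sits in the lemma's statement itself, is shared by the paper's argument, and is harmless downstream, since Theorem~\ref{thm:main} only needs the count up to a factor $1+o(1)$.
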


%\begin{corollary}
%\label{cor:sc_Gnk}
%Almost-group automata are strongly connected with high probability:
%there are $(n-1)n!^k(1+o(n^{1-k}))$ strongly connected elements in $\G_{n,k}$.
%\end{corollary}

\subsection{Non-synchronizing Almost-Group Automata: a Lower Bound}
In this section we give a lower bound on the number of strongly connected elements of
$\G_{n,k}$ that are not synchronizing. In order to do so, we build a sufficiently large family
of automata of that kind. The construction of this family is intuitively driven by the structure given in Lemma~\ref{lem:size_of_components} but the formal details of the construction can be done without mentioning this structure.
%Though intuitively driven by  the structure given by Lemma~\ref{lem:size_of_components}, this construction can be described without it, as follows.

For $n\geq 3$, let $\F_{n,k}$ be the subset of $\G_{n,k}$, made of the almost-group automata on $\Sigma$ with set of states $E_n$ such that:
\begin{enumerate}
\item there exists a state $p$ that is not the dangling state $p_0$ such that for every letter
$a\neq a_0$, either $p\cdot a = p_0$ and $p_0\cdot a=p$, or $p\cdot a = p$ and 
$p_0\cdot a=p_0$;
\item for at least one letter $a\neq a_0$, we have $p\cdot a = p_0$ and $p_0\cdot a=p$;
\item there exists a state $q\in Q'=E_n\setminus\{p,p_0\}$ such that the action of $a_0$ on $Q\setminus\{p_0\}$ is a permutation with $q$ being the image of $p$;%\ml
%where the preimage of $q$ is $p$;
\item the image of the dangling state by $a_0$ is  $p_0\cdot a_0 = q$.
\item let $q'$ be the preimage of $p$ by $a_0$; if one removes the states $p$ and $p_0$
and set $q'\cdot a_0=q$, then the resulting automaton is a strongly connected group automaton;
\end{enumerate}
The structure of such an automaton is depicted on Fig.~\ref{fig:Fnk}. Clearly from the definition, 
an element of $\F_{n,k}$ is a strongly connected almost group automaton with the dangling state $p_0$.

\begin{figure}[h]
\begin{center}
\begin{tikzpicture}
\node[draw,circle,fill=black!15] (p0) at (-2,1) {$p_0$};
\node[draw,circle] (p) at (-2,-1) {$p$};
\node[draw,circle] (q) at (0,0) {$q$};
\node[draw,circle] (q1) at (-.1,-2.0) {$q'$};
\node (q2) at (1.5,-2) {};
\node (q3) at (2,-1) {};

\draw[->,thick] (p0) -- node[above]{$a_0$} (q);
\draw[->,thick] (p) -- node[below]{$a_0$} (q);
\draw[->,thick,dotted] (q1) edge[bend left] node[left]{$a_0$} (p);
\draw[->,thick,dotted] (q2) edge[bend left] node[below]{$a_0$} (q1);
\draw[->,thick,dotted] (q) edge[bend left] node[below]{$a_0$} (q3);
\draw[->,thick,dotted] (q3) edge[bend left] (q2);

\draw[->] (p) edge[bend left] node[left]{$a_1$}(p0);
\draw[->] (p0) edge[bend left] node[right]{$a_1$}(p);

\draw[->] (p0) edge[loop left] node[left]{$a_2,a_4$}(p0);
\draw[->] (p) edge[loop left] node[left]{$a_2,a_4$}(p);

\node [cloud, draw,cloud puffs=20,cloud puff arc=100, aspect=.9,minimum width=5.2cm,
minimum height=4.8cm] (cloud) at (1.7,-1.1) {};
\node (qp) at (3,.5) {$\mathcal{Q'}$};
\end{tikzpicture}
\end{center}
\caption{The shape of an element of $\F_{n,k}$, with the dangling state $p_0$. \label{fig:Fnk}}
\end{figure}
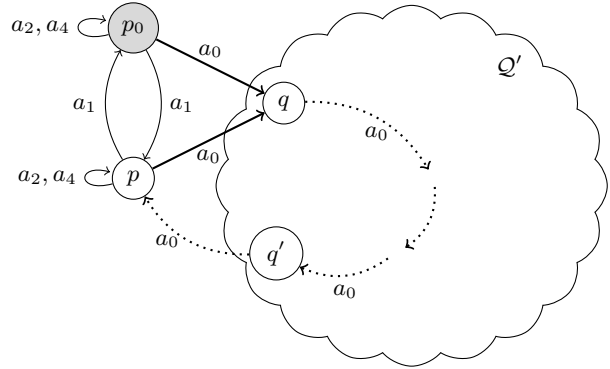

\begin{lemma}\label{lem:Fnk not synchronizing}
For every $n\geq 3$, every automaton of $\F_{n,k}$  is not synchronizing.
\end{lemma}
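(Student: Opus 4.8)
The plan is to show that every automaton in $\F_{n,k}$ admits a nontrivial stable congruence, which by the criterion of Theorem~\ref{thm:non-synch-criterion} forces non-synchronization. The natural candidate partition is suggested by the very definition of $\F_{n,k}$: the pair $\{p,p_0\}$ should merge, while the remaining states of $Q'=E_n\setminus\{p,p_0\}$ stay in singleton classes. So first I would propose the equivalence relation $\rho$ whose only nontrivial class is $\{p,p_0\}$, all other classes being singletons, and then verify that $\rho$ is a congruence for all letters and that $\{p,p_0\}$ is genuinely a stable pair (not merely a congruence class of some coarser relation).

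The key verification splits by letter. For a letter $a\neq a_0$, condition~(1) guarantees that $\{p,p_0\}$ is setwise preserved: either $a$ swaps $p$ and $p_0$, or it fixes both. In either case $\{p,p_0\}\cdot a=\{p,p_0\}$, and since such $a$ acts as a permutation, all other classes (singletons) are permuted among themselves; thus $\rho$ is respected. For $a_0$, conditions~(3) and~(4) give $p\cdot a_0=q$ and $p_0\cdot a_0=q$, so the class $\{p,p_0\}$ is mapped into the single state $q$, hence into a single $\rho$-class; because $a_0$ acts as a permutation on $Q\setminus\{p_0\}$, distinct states of $Q'$ have distinct images, so the singleton classes are again mapped to singleton classes. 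This shows $\rho$ is invariant under every letter, i.e. a congruence. To conclude that $\{p,p_0\}$ is actually \emph{stable}, I would read $a_0$ from the merged state: since $p\cdot a_0=p_0\cdot a_0=q$, the pair $\{p,p_0\}$ is synchronized by the single letter $a_0$, and the congruence property then propagates stability to the whole class, matching the definition of a stable pair.

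Having exhibited a stable congruence with a nontrivial class, I would invoke Theorem~\ref{thm:non-synch-criterion}: the number of stable classes is $|Q'|+1 = n-1 > 1$ for $n\geq 3$, so $|\mathcal{B}\cup\mathcal{S}|>1$ and the automaton is non-synchronizing. Strong connectivity, which is needed to apply the theorem cleanly, is already built into $\F_{n,k}$ via condition~(5) together with the contracted transition $q'\cdot a_0=q$, and the text notes that membership in $\F_{n,k}$ guarantees a strongly connected almost-group automaton with dangling state $p_0$.

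I expect the main subtlety to be confirming that the proposed $\rho$ is exactly the stability relation (or at least refines it appropriately) rather than accidentally being too fine. The delicate point is that stability is defined by a universal-then-existential quantifier ("for every $u$ there is $v$"), so I must argue that no other merges are forced and, conversely, that $\{p,p_0\}$ cannot be separated into a deadlock. The cleanest route avoids computing the full stability relation: it suffices to observe that the relation generated by the single stable pair $\{p,p_0\}$ is a congruence with more than one class, and that any automaton possessing a stable pair has more than one stable class. This sidesteps the need to characterize stability completely and reduces the argument to the letter-by-letter invariance check above, which is the routine part.
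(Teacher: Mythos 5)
Your congruence $\rho$ and the letter-by-letter invariance check are essentially sound (modulo one slip: under $a_0$ the singleton $\{q'\}$ is mapped \emph{into} the class $\{p,p_0\}$, since $q'\cdot a_0=p$, so it is not true that singleton classes go to singleton classes), and $\{p,p_0\}$ is indeed a stable pair. The genuine gap is the concluding step. Theorem~\ref{thm:non-synch-criterion} is a statement about the classes of the \emph{stability relation}, and you apply it as if your $\rho$ were that relation. What you have shown is only that $\rho$ is \emph{some} congruence whose unique nontrivial class is a stable pair; the stability relation is in general coarser than $\rho$, and if the automaton happened to be synchronizing it would be the complete relation (a single class) while $\rho$ would remain a perfectly good congruence. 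The existence of a nontrivial proper congruence never precludes synchronization by itself. Your fallback claim, ``any automaton possessing a stable pair has more than one stable class,'' is false for exactly this reason: in a synchronizing automaton \emph{every} pair is stable and there is exactly \emph{one} stable class. So, as written, the argument is circular --- asserting that there are $n-1$ stable classes presupposes the very fact the lemma asks you to prove, namely that no pair other than $\{p,p_0\}$ can ever be merged.

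The missing idea needed to repair your route is to pass to the \emph{quotient}: verify that every letter acts as a permutation on the $n-1$ classes of $\rho$. For $a\neq a_0$ this follows from condition~(1) and bijectivity of $a$; for $a_0$, note that $\{p,p_0\}\mapsto\{q\}$, that $\{q'\}\mapsto\{p,p_0\}$, and that no singleton class can also be sent to $\{q\}$ because $p$ is the unique $a_0$-preimage of $q$ in $Q\setminus\{p_0\}$ by condition~(3); injectivity of $a_0$ on $Q\setminus\{p_0\}$ handles the remaining singletons. Hence $\A/\rho$ is a group automaton with $n-1\geq 2$ states, which is never synchronizing, and since a reset word for $\A$ would project onto a reset word for $\A/\rho$, the automaton $\A$ cannot be synchronizing. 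The paper's own proof is shorter and avoids quotients entirely: it observes that $\{p_0,p\}$ is the only pair that a single letter can merge (necessarily $a_0$), and that the preimage of $\{p_0,p\}$ under any letter is either $\{p_0,p\}$ itself (for $a\neq a_0$) or the singleton $\{q'\}$ (for $a_0$); consequently no other pair can ever be mapped onto $\{p_0,p\}$, so every pair other than $\{p_0,p\}$ is a deadlock, and such pairs exist once $n\geq 3$.
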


\begin{proof}
First observe that $\{p_0,p\}$ is the only pair that can be synchronized by reading just a letter, which has to be $a_0$. 
The preimage of $\{p_0,p\}$ is either $\{p_0,p\}$ for $a \neq a_0$ or a singleton $\{q'\}$ otherwise. Hence, no other pair can be mapped to $\{p_0,p\}$ and thus be synchronized by more that one letter.
%We prove that any pair of distinct states $\{q_1,q_2\}\neq \{p_0,p\}$ cannot be synchronized. First observe that $ \{p_0,p\}$ is the only pair that can be synchronized by reading just a letter, which has to be $a_0$. Moreover, since at least one of $q_1$ or $q_2$ is in $Q'$, at least one of $q_1\cdot a$ or $q_2\cdot a$ is also in $Q'$, for every letter $a$, by construction. Hence, by direct induction, there exists no word $u$ such that  $q_1\cdot u$ and $q_2\cdot u$ are both in $\{p_0,p\}$, concluding the proof.
\qed
\end{proof}

\begin{lemma}
\label{lem:lower_bound}
There are $(2^{k-1}-1)n(n-1)(n-2)(n-2)!^{k}(1+o(n^{1-k}))$ elements in $\F_{n,k}$. Thus there are at least that many strongly connected non-synchronizing almost-group automata.
\end{lemma}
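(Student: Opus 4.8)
The plan is to count the elements of $\F_{n,k}$ directly by choosing the defining data in the order laid out in the definition, then argue that almost all such choices yield a strongly connected automaton so that the strong-connectivity requirement (item~5) only contributes the announced $(1+o(n^{1-k}))$ factor. First I would fix the dangling state $p_0$ and the special state $p$: there are $n$ choices for $p_0$ and $n-1$ choices for $p\neq p_0$. This leaves the set $Q'=E_n\setminus\{p,p_0\}$ of size $n-2$. Next I would choose the state $q\in Q'$, which gives $n-2$ choices, and this simultaneously fixes the image of $p$ and of $p_0$ under $a_0$ (items~3 and~4). So far the count is $n(n-1)(n-2)$, matching the polynomial prefactor in the statement.

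Next I would handle the letters $a_1,\dots,a_{k-1}$. By item~1, each such letter $a$ either swaps $p$ and $p_0$, or fixes both; by item~2 at least one letter must swap them. This is exactly a choice of a nonempty subset of $\{a_1,\dots,a_{k-1}\}$ (the swapping letters), giving $2^{k-1}-1$ possibilities and explaining that factor. For each letter $a\neq a_0$, once its behavior on $\{p,p_0\}$ is fixed, its action on $Q'$ must be a permutation of $Q'$ (since $a$ is globally a permutation and already permutes $\{p,p_0\}$ among themselves), yielding $(n-2)!$ independent choices per letter, hence $(n-2)!^{k-1}$ in total. For the letter $a_0$, item~3 says $a_0$ restricted to $Q\setminus\{p_0\}$ is a permutation sending $p\mapsto q$; equivalently, after contracting the forced edges $p\mapsto q$ and $p_0\mapsto q$, the induced action on $Q'$ is a permutation of $Q'$ fixing the constraint that $q$ is the image of $p$. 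Counting these gives another $(n-2)!$ factor (the permutations of $Q'$, with $p$'s image pinned to $q$ absorbed into the earlier choice of $q$), so together with the other letters we obtain the $(n-2)!^{k}$ term.

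Multiplying the exact counts gives $(2^{k-1}-1)\,n(n-1)(n-2)\,(n-2)!^{k}$ as the number of tuples satisfying items~1--4; the remaining task is to show that item~5 discards only a negligible $o(n^{1-k})$ fraction of these. The cleanest route is to recognize the reduced object from item~5 — delete $p$ and $p_0$ and reroute $q'\cdot a_0=q$ — as a group automaton on the $(n-2)$-element set $Q'$, and observe that the construction above puts the $\F_{n,k}$-automata in essentially bijective correspondence with arbitrary group automata on $Q'$ (together with the subset and state choices). Then I would invoke Lemma~\ref{lem:sc_group_automata}: the proportion of group automata on an $(n-2)$-element alphabet-$k$ set that fail to be strongly connected is $o((n-2)^{1-k})=o(n^{1-k})$. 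Hence imposing item~5 multiplies the count by $(1+o(n^{1-k}))$, and the stated estimate follows. The final claim that these are all strongly connected non-synchronizing almost-group automata then combines item~5 (strong connectivity) with Lemma~\ref{lem:Fnk not synchronizing} (non-synchronization).

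\medskip
\noindent\textbf{Main obstacle.} The genuinely delicate point is the reduction in item~5 and the application of Lemma~\ref{lem:sc_group_automata}. One must verify that strong connectivity of the full automaton in $\F_{n,k}$ is equivalent to strong connectivity of the reduced group automaton on $Q'$: the back direction is easy since $p$ and $p_0$ connect to $Q'$ via $q$ and are reached from $Q'$ via $q'$, but one must confirm that rerouting the single edge $q'\cdot a_0$ does not change the strongly connected structure in a way that invalidates the count, and that the error term $o(n^{1-k})$ from a set of size $n-2$ is genuinely $o(n^{1-k})$ and not merely $o((n-2)^{1-k})$ (it is, since $k$ is fixed). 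Care is also needed to ensure the correspondence with group automata on $Q'$ is exactly measure-preserving so that the ``bad'' non-strongly-connected fraction transfers cleanly.
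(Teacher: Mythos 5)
Your proposal is correct and follows essentially the same route as the paper: both decompose an element of $\F_{n,k}$ into the choice of $p_0,p,q$ (giving $n(n-1)(n-2)$), the choice of a nonempty set of swapping letters (giving $2^{k-1}-1$), and a strongly connected group automaton on the remaining $n-2$ states, whose count $(n-2)!^k(1+o(n^{1-k}))$ comes from Lemma~\ref{lem:sc_group_automata}. The paper simply runs the bijection in the constructive direction (building each element of $\F_{n,k}$ exactly once from a strongly connected group automaton on $Q'$), whereas you count the data of items 1--4 first and then impose item 5 via the same correspondence; the two are equivalent, and your worry about $o((n-2)^{1-k})$ versus $o(n^{1-k})$ is indeed harmless since $k$ is fixed.
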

\begin{proof}
From the definition of $\F_{n,k}$, we observe that there are $n(n-1)(n-2)$ ways to choose $p_0$, $p$ and $q$. Once it is done, we choose any strongly connected group automaton $\A'$ with $n-2$ states in $E_N\setminus\{p_0,p\}$; there are $(n-2)!^k(1+o(n^{1-k}))$ ways to do that according to Lemma~\ref{lem:sc_group_automata}. We then change the transition from the preimage $q'$ of $q$ by $a_0$ by setting $q'\cdot a_0 = p$. We set $p\cdot a_0=p_0\cdot a_0 =q$. Finally we choose the actions of the letters $a\in\Sigma\setminus\{a_0\}$ on $\{p_0,p\}$ in
one of the $2^{k-1}-1$ possible ways, as at least one of them is not the identity. This concludes the proof, since all the elements of $\F_{n,k}$ are built exactly once this way.\qed
\end{proof}

Observe that using the definitions of Lemma~\ref{lem:size_of_components}, an element of $\F_{n,k}$ consists of exactly one stable class $\{p_0,p\}$ in $\B$ and $n-2$ stable classes of size $1$ in $\S$.

\subsection{Non-synchronizing Almost-Group Automata: an Upper Bound}
In this section, we upper bound the number of non-synchronizing strongly-connected elements of $\G_{n,k}$  using the characterization of Lemma~\ref{lem:size_of_components}. In the sequel, we freely use the notations used in this lemma (the sets $D$, $\B$, $\S$, \ldots).

Let $b\geq 1$, $s\geq 0$ and $\ell\geq 1$ be three non-negative integers such that $(\ell+1)b + \ell s = n$.
Let $\G_{n,k}(b,s,\ell)$ denote the subset of $\G_{n,k}$ made of the automata such that
$|\B|=b$, $|\S|=s$ and $|D|=\ell+1$.
\begin{lemma}
\label{lem:main_bound}
The number of  non-synchroninzing strongly-connected elements of $\G_{n,k}(b,s,\ell)$ is at most 
\[
\begin{cases}
n! (n-2)!^{k-1}(n-2)(2^{k-1}-1) & \text{if }b=1,\,s=n-2,\text{ and}\ \ell=1,\\
n! \max(1,s) \ell\big(b!s!(\ell+1)!^{b}\ell!^{s}\big)^{k-1}&\text{otherwise}.
\end{cases}
\]
\end{lemma}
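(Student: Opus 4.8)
The plan is to set up an injective encoding of each non-synchronizing strongly connected automaton $\A\in\G_{n,k}(b,s,\ell)$ by its canonical structural data and then bound the number of such data by a product of independent factors. By Lemma~\ref{lem:size_of_components}, the stability relation of $\A$ partitions $E_n$ into $b$ classes of size $\ell+1$ (gathered in $\B$, with $D\in\B$ the class of the dangling state $p_0$) and $s$ classes of size $\ell$ (gathered in $\S$). Since the partition, the class $D$, the state $p_0$, and the action of every letter are all recovered from $\A$, the map sending $\A$ to the tuple $(\text{partition},D,p_0,\text{actions})$ is injective; relaxing the defining constraints to mere necessary conditions can only enlarge its range, so counting such tuples yields an upper bound. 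First I would count the partition together with $D$ and $p_0$ as $\frac{n!\,b(\ell+1)}{(\ell+1)!^b\,\ell!^s\,b!\,s!}$ (the multinomial for unlabeled classes, times $b$ for the choice of $D$, times $\ell+1$ for the choice of $p_0\in D$).

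Next I would handle the letters $a_1,\dots,a_{k-1}$, which act as permutations and hence, by Lemma~\ref{lem:factor_automaton}, induce size-preserving permutations of the classes. Each such letter is described by a factor permutation (at most $b!\,s!$ choices, permuting $\B$ and $\S$ separately) together with, for each class, a bijection onto its image; the latter contributes $(\ell+1)!^b\ell!^s$ independently of the factor permutation. This produces the factor $\big(b!\,s!\,(\ell+1)!^b\ell!^s\big)^{k-1}$.

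Then I would treat $a_0$, splitting its count into (number of admissible factor actions $\bar a_0$) times (number of liftings per factor action). For the liftings I argue class by class along the cycles of $\bar a_0$: outside the $a_0$-cycle of $D$ each class maps bijectively onto its image, while inside that cycle $D\to C_1\to\cdots\to C_{r-1}\to D$ the $\ell$ cyclic points of $D$ biject onto $C_1$, $p_0$ sends to one of $\ell$ targets in $C_1$, each intermediate class bijects onto the next, and $C_{r-1}$ bijects onto $D\setminus\{p_0\}$ (so that $p_0$ is indeed the unique point without preimage). Routine bookkeeping shows the total is $\ell\,(\ell+1)!^{b-1}\ell!^{s+1}$, crucially independent of $r$ and of which $\S$-classes lie in $D$'s cycle. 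For the admissible $\bar a_0$, I note that $\bar a_0$ permutes the $b-1$ classes of $\B\setminus\{D\}$ arbitrarily ($(b-1)!$ choices) and permutes $\{D\}\cup\S$; any such permutation automatically meets the cycle condition of Lemma~\ref{lem:size_of_components}, but strong connectivity forbids $D$ from being a fixed point when $s\ge1$, since size changes occur only at $D$'s cycle, so a fixed $D$ would force all classes to have the same size, i.e.\ $s=0$. Hence the number of admissible permutations of $\{D\}\cup\S$ is $(s+1)!-s!=s\cdot s!$ for $s\ge1$ and $1$ for $s=0$, that is $\max(1,s)\,s!$ in all cases, and $a_0$ contributes $(b-1)!\,\max(1,s)\,s!\cdot\ell\,(\ell+1)!^{b-1}\ell!^{s+1}$.

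Multiplying the three contributions and simplifying (the $b$, $s!$, and $(\ell+1)!$ terms telescope, leaving $n!\max(1,s)\ell$ before the $(k-1)$-power factor) gives exactly $n!\,\max(1,s)\,\ell\,\big(b!\,s!\,(\ell+1)!^b\ell!^s\big)^{k-1}$. Finally, for the special case $b=1,\,s=n-2,\,\ell=1$ this generic bound yields $2^{k-1}$ in place of $2^{k-1}-1$; I recover the sharper constant by one more use of strong connectivity: with $b=1$ every $a_i$ fixes the class $D=\{p_0,p\}$ and acts on it either as the identity or as the transposition, and since $p_0$ has no $a_0$-preimage, it would be unreachable if all $a_i$ fixed it pointwise, so this single configuration must be excluded, replacing $2^{k-1}$ by $2^{k-1}-1$. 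I expect the main obstacle to be the $a_0$ count: proving that the number of liftings is the same for every admissible factor action, and pinning down exactly which factor actions are admissible — this is where both the structural Lemma~\ref{lem:size_of_components} and strong connectivity are indispensable.
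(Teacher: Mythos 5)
Your proof is correct and follows essentially the same route as the paper's: the same decomposition into (class partition) $\times$ (factor-automaton permutations constrained by Lemma~\ref{lem:size_of_components}) $\times$ (per-class liftings), the same $\max(1,s)\,s!$ count for the action of $a_0$ on $\{D\}\cup\mathcal{S}$ (the paper phrases it as inserting $D$ into an $a_0$-cycle of $\mathcal{S}$, you phrase it as excluding permutations fixing $D$ — these are the same count), and the same strong-connectivity argument forcing one letter to swap the two states of $D$ in the case $(\ell,b,s)=(1,1,n-2)$. The only difference is bookkeeping — you attach the choices of $D$ and $p_0$ to the partition step and organize the count letter by letter, whereas the paper separates factor transitions from liftings — and the products coincide.
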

\begin{proof}
Our proof consists in counting the number of ways to build, step by step, an element of $\G_{n,k}(b,s,\ell)$.

Firstly, by elementary computations, one can easily verify that the number of 
ways to split $E_n$ into $b$ subsets of size $\ell+1$ and $s$ subsets of size $\ell$ is exactly
\begin{equation}\label{eq:count_partitioning}
\frac{n!}{(\ell+1)!^{b}\ell!^{s}b!s!}.
\end{equation}

Secondly, let us count the number of ways to define the transitions at the level of the factor automaton, i.e. between stable classes, as follows:
\begin{itemize}
\item Choose a permutation on $\mathcal{B}$ in $b!$ ways and on $\mathcal{S}$ in $s!$ ways for each of the $k-1$ letters $a \neq a_0$.
\item Choose which  stable class of $\mathcal{B}$ is the class $D$, i.e. the one containing the dangling state $p_0$, amongst the $b$ possibilities.
\item Choose a permutation for $a_0$ on the $b-1$ classes $\mathcal{B} \setminus\{D\}$ in $(b-1)!$ ways.
\item If $s\neq0$, choose one of the $s!$ permutations of $\mathcal{S}$ for the action of $a_0$ on these classes, then alter the action of $a_0$ the following way: choose the image $D'$ of $D$ by $a_0$ in $\mathcal{S}$ in $s$ ways, then insert it in the $a_0$-cycle: if $D''$ is the former preimage of $D'$, then now $D\cdot a_0 = D'$ and $D''\cdot a_0 = D$ in $\A/\rho$.
\item If $s=0$, then set $D\cdot a_0 = D$ in $\A/\rho$.
\end{itemize}
In total, the number of ways to define the transitions of the factor automaton $\mathcal{A} / \rho$, once the stable classes are chosen is
\begin{equation}\label{eq:factor_count}
(b!s!)^{k-1}b(b-1)!\max(1,s)s! = b!^ks!^{k}\max(1,s).
\end{equation}

Now, we need to define transitions between stable classes for all letters. For all letters but $a_0$, there are $b$ injective transitions between stable classes of size $\ell+1$ and $s$ injective transitions between stable classes of size $\ell$, that is, there are at most $(\ell+1)!^b \ell!^s$ ways to define them for each of the $k-1$ letters. This is an upper bound, as some choices may result in an automaton that is, for instance, not strongly connected. We refine this bound for the
 case $\ell=1, b=1, s=n-2$: one of the letters must swap the states in the single $2$-element class in $\mathcal{B}$ for strong connectivity, so we count just one choice instead of $2$ (for $(\ell+1)!$) to define this letter on this component, that is, we consider only $2^{k-1}-1$ ways to define all permutations on $\mathcal{B}$ in this case, instead of the $((\ell+1)!^b)^{k-1}$ upper bound in the general case (this refinement is used to match our lower bound).

For the action of $a_0$, we additionally choose the dangling state $p_0 \in D$ in $\ell+1$ ways and its image in $D\cdot a_0$ in $\ell$ ways: there are $\ell$ choices in the case where $D\cdot a_0=D$, since $p_0\cdot a_0\neq p_0$, and also when $D\cdot a_0\neq D$, since $D\cdot a_0\in\S$ in this case, according to Lemma~\ref{lem:size_of_components}. Then, it remains to define the injective transitions between the $\mathcal{B} \setminus\{D\}$ blocks in $(\ell+1)!^{b-1}$ ways, and the $s+1$ injective transitions between the $\mathcal{S} \cup \{D'\}$ blocks in $\ell!^{s+1}$ ways, where $D'=D\setminus\{p_0\}$.

Thus, the number of ways to define the transitions between stable classes is at most
$((\ell+1)!^b \ell!^s)^{k-1}\ell(\ell+1)(\ell+1)!^{b-1}\ell!^{s+1}  = \ell(\ell+1)!^{bk}\ell!^{sk}$, in the general case, and  $2(2^{k-1}-1)$ in the case $\ell=1, b=1, s=n-2$.

Putting together \eqref{eq:count_partitioning}, \eqref{eq:factor_count} and this last counting result
yield  the lemma.\qed
\end{proof}

\begin{lemma}
\label{lem:upper_bound}
The number of non-synchroninzing strongly-connected almost-group automata in $\G_{n,k}$ is at most $n(2^{k-1}-1)n!(n-2)!^{k-1}(1+o(1/n))$.
\end{lemma}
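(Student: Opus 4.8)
The plan is to sum the bound from Lemma~\ref{lem:main_bound} over all admissible triples $(b,s,\ell)$ and show that the term corresponding to the special case $b=1,\,s=n-2,\,\ell=1$ dominates, with everything else contributing only a lower-order $o(1/n)$ relative error. First I would record that the admissible triples are exactly those with $b\geq 1$, $s\geq 0$, $\ell\geq 1$ and $(\ell+1)b+\ell s = n$, so that the stated upper bound is $\sum_{(b,s,\ell)} N(b,s,\ell)$, where $N(b,s,\ell)$ is the quantity from Lemma~\ref{lem:main_bound}. I would isolate the distinguished term
\[
N(1,n-2,1) = n!\,(n-2)!^{k-1}(n-2)(2^{k-1}-1),
\]
which is exactly the claimed main term $n(2^{k-1}-1)n!(n-2)!^{k-1}$ up to the factor $(n-2)/n = 1+O(1/n)$; so the whole task reduces to showing that the sum of the \emph{other} terms is $o\bigl(n!(n-2)!^{k-1}\bigr)$, i.e.\ smaller by a factor $o(1/n)$.

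Next I would bound the generic term. For a triple in the ``otherwise'' branch we have
\[
N(b,s,\ell) = n!\,\max(1,s)\,\ell\,\bigl(b!\,s!\,(\ell+1)!^{b}\,\ell!^{s}\bigr)^{k-1}.
\]
The factor of $n!$ is common to every term and to the main term, so I would factor it out and compare the reduced quantity $M(b,s,\ell) = \max(1,s)\,\ell\,\bigl(b!\,s!\,(\ell+1)!^{b}\,\ell!^{s}\bigr)^{k-1}$ against $M_0 := (n-2)!^{k-1}(n-2)(2^{k-1}-1)$, the reduced main term. The key combinatorial estimate is that among the multinomial-type products $b!\,s!\,(\ell+1)!^{b}\,\ell!^{s}$ subject to $(\ell+1)b+\ell s=n$, the configuration that makes the product of factorials as large as possible is the one with a single large block, whereas $(b,s,\ell)=(1,n-2,1)$ gives many small blocks and hence a \emph{small} product $1!\cdot(n-2)!\cdot 2!\cdot 1!^{\,n-2}=2(n-2)!$. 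This is the crucial point and the main obstacle: the generic bound $M(b,s,\ell)$ is actually \emph{largest} for the clustered configurations, so a naive ``the special term dominates'' intuition is backwards, and I must instead argue that those large-product configurations, while individually big, are forced by the constraint to have $k-1\geq 1$ copies of large factorials that are nonetheless dominated once one accounts for the strong-connectivity loss and, more importantly, that they correspond to automata whose count is compensated elsewhere. The honest way through is to show directly that $\sum_{(b,s,\ell)\neq(1,n-2,1)} M(b,s,\ell) = o(M_0)$ by a term-by-term exponent comparison.

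Concretely, the approach I would take for the generic terms is to bound the factorial product crudely: since $(\ell+1)b+\ell s=n$, one has $\ell+1 \le n$ and each factorial is at most $(n-2)!$ times a controllable correction, and the number of admissible triples is at most polynomial in $n$ (bounded by $n^{2}$, since $\ell$ and $b$ determine $s$). The decisive observation is that for $k\geq 2$ the dominant competing family is $b=1,\,s=n-2,\,\ell=1$ itself among the ``many-singleton'' regime, and any deviation either increases $\ell$ (producing a large block of size $\ell+1$ whose single factorial $(\ell+1)!$ raised to the power $k-1\ge1$ is more than offset by the loss of the $(n-2)!^{k-1}$ coming from $s=n-2$ singletons) or decreases $s$ (fewer singletons, hence a strictly smaller factorial product after the exponent bookkeeping). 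I would formalize this by writing each reduced term as $n!^{-1}N(b,s,\ell)$ divided by $M_0$ and checking that this ratio is $O(n^{-\epsilon})$ uniformly, summing the at-most-$n^{2}$ contributions to get a total still $o(1/n)$ relative to the main term; multiplying back by the common $n!$ and absorbing the polynomial count of triples into the $(1+o(1/n))$ error then yields the claimed bound $n(2^{k-1}-1)n!(n-2)!^{k-1}(1+o(1/n))$.
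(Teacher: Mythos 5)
Your overall skeleton coincides with the paper's: sum the bounds of Lemma~\ref{lem:main_bound} over the $O(n^2)$ admissible triples and show that the term $(b,s,\ell)=(1,n-2,1)$ dominates, absorbing the rest into the error term. But your central combinatorial claim is backwards, and this is a genuine gap, not a presentational one. You assert that the product $b!\,s!\,(\ell+1)!^{b}\,\ell!^{s}$ is maximized by ``clustered'' configurations with a single large block, so that ``the special term dominates'' would be naive. No such configuration is admissible: by Lemma~\ref{lem:size_of_components} every stable class has size $\ell$ or $\ell+1$, and non-synchronization forces $b+s\geq 2$, so every competing configuration consists of at least two blocks of \emph{nearly equal} size --- a single large block plus small ones simply cannot occur. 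Moreover, you overlook the role of the factor $s!$, which is exactly what makes the special term the \emph{largest}: for $(1,n-2,1)$ the product is $2(n-2)!$, whereas, say, two blocks of size about $n/2$ give roughly $2\bigl((n/2)!\bigr)^{2} = \Theta\bigl(n^{2.5}2^{-n}\bigr)(n-2)!$, exponentially smaller. The paper's proof consists precisely in making this domination quantitative: it shows $N_{1,1,n-2}/N_{\ell,b,s} \geq M_{\ell,b,s}^{\,k-1}$ with $M_{\ell,b,s} = \frac{(n-2)!}{b!s!(\ell+1)!^{b}\ell!^{s}}$, proves the factorial inequalities $\frac{(b(\ell+1))!}{(\ell+1)!^{b}} \geq b!^{\ell+1}$ and $\frac{n!}{(b(\ell+1))!\,\ell!^{s}} \geq \bigl((b+s)!/b!\bigr)^{\ell}$ to get $M_{\ell,b,s} \geq \frac{(b+s)!^{\ell}}{n^{2}s!}$, and then does a case analysis on $\ell$ (with $\ell=1$, $b\in\{2,3\}$ treated separately). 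None of this machinery, nor any substitute for it, appears in your proposal.

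Even granting the correct direction of domination, your quantitative plan does not reach the stated error term. A uniform bound of $O(n^{-\epsilon})$ on each ratio, summed over $\Theta(n^{2})$ triples, gives $O(n^{2-\epsilon})$, which proves nothing; you need the generic ratio to be polynomially small of sufficiently high degree (the paper aims at a cubic lower bound on $M_{\ell,b,s}$, raised to the power $k-1\geq 1$), and you must isolate the exceptional near-maximal triples --- in particular $(b,s,\ell)=(2,n-4,1)$, whose ratio is $\Theta(n^{-2(k-1)})$ and which is the true second-largest contribution; your proposal never identifies it. Finally, your reduction is internally inconsistent: you first (correctly) require the remaining terms to sum to $o\bigl(n!\,(n-2)!^{k-1}\bigr)$, i.e.\ $o(1/n)$ relative to the main term, but later only aim at showing they sum to $o(M_0)$, i.e.\ $o(1)$ relative to the main term, which would only yield the bound with $(1+o(1))$ in place of $(1+o(1/n))$.
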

\begin{proof}
By Lemma~\ref{lem:lower_bound} and Theorem~\ref{thm:non-synch-criterion}, the number of non-synchroninzing strongly-connected almost-group automata in $\G_{n,k}$  is at most
\begin{equation}
\label{eq:sum_bs}
	n!\sum_{\ell=1}^{\lfloor n/2 \rfloor}\sum_{\{b,s \mid b(\ell+1) + s\ell = n\} } N_{\ell,b,s},
\end{equation}
where $b \geq 1$, $s \geq 0$, and $b+s\geq 2$, and where $N_{\ell,b,s}$ is defined by
\begin{equation}
	N_{\ell,b,s} = 
    \begin{cases}
		\max(1,s) \ell (b!s!(\ell+1)!^{b}\ell!^{s})^{k-1}, & \text{for } (\ell,b,s) \neq (1,1,n-2) \\
        (n-2)!^{k-1}(n-2)(2^{k-1}-1), & \text{for } (\ell,b,s) = (1,1,n-2).
	\end{cases}
\end{equation}
%Notice that $N_{1,1,n-2}$ is asymptotically equal to $n(2^{k-1}-1)(n-2)!^{k-1}$, i.e. the formula in the statement divided by $n!$. We claim that the sum in (\ref{eq:sum_bs}) is asymptotically equivalent to the term $N_{1,1,n-2}$. 
To finish the proof, it will be sufficient to prove that the sum in~(\ref{eq:sum_bs}) is asymptotically equivalent to the term $N_{1,1,n-2}$ since $n!N_{1,1,n-2}$ is asymptotically equivalent to the expression stated in Lemma~\ref{lem:upper_bound}.

To prove this, let us consider the following fraction for $(\ell,b,s) \neq (1,1,n-2)$:
\begin{equation}
\label{leq:fraction1}
\frac{N_{1,1,n-2}}{N_{\ell,b,s}} = \frac{n-2}{\max(1,s) \ell}\frac{(n-2)!^{k-1}(2^{k-1}-1)}{ (b!s!(\ell+1)!^{b}\ell!^{s})^{k-1}} \geq \left(\frac{(n-2)!}{b!s!(\ell+1)!^{b}\ell!^{s}}\right)^{k-1},
\end{equation}
where we used that $n-2 = s\ell + b(\ell+1)-2 \geq s \ell$, as $b$ and $\ell$ are positive; thus $n-2\geq \max(1,s)\ell$ if $s>0$; but it also holds if $s=0$ since $b+s\geq2$.

Observe that, for positive $\ell$ and $m$ we have
\begin{align*}
\frac{(bm)!}{m!^b} &= \left(\frac{1\cdot 2\cdots m}{1\cdot 2\cdots m}\right)
\left(\frac{(m+1)(m+2)\cdots 2m}{1\cdot 2\cdots m}\right)\cdots
\left(\frac{((b-1)m+1)\cdots bm}{1\cdot2\cdots m}\right)\\
& \geq 1^m\cdot 2^m\cdots b^m = b!^m
\end{align*}
Hence, for $m=\ell+1$, we have $\frac{(b(\ell+1))!}{(\ell+1)!^b}\geq b!^{\ell+1}$.
Similarly, one can get that 
\begin{equation}
{\frac{n!}{(b(\ell+1))!}}\frac{1}{\ell!^s} \geq \left(\frac{(b+s)!}{b!}\right)^{\ell}.
\end{equation}
Let $M_{\ell,b,s}=\frac{(n-2)!}{b!s!(\ell+1)!^{b}\ell!^{s}}$, the expression in brackets of (\ref{leq:fraction1}). This quantity can be bounded from below as follows.
\begin{align}
\label{eq:fraction2}
M_{\ell,b,s} &= \frac{1}{n(n-1)b!s!}\frac{(b(\ell+1))!}{(\ell+1)!^{b}} \frac{n!}{(b(\ell+1))!\ell!^s} \\ 
&\geq \frac{b!^{\ell+1}}{n(n-1)b!s!} \left(\frac{(b+s)!}{b!}\right)^{\ell}
\geq \frac{(b+s)!^\ell}{n^2 s!}.
\end{align}

Recall that we want to prove that $M_{\ell,b,s}$ is sufficiently large, so that $N_{1,1,n-2}$ is really
larger than $N_{\ell,b,s}$. Notice that there are at most quadratic in $n$ number of combinations $(\ell, b, s)$ satisfying $b(\ell+1) + s\ell = n$, as for any values $1 \leq b, \ell < n$ there is at most one suitable value of $s$. Therefore, qubic lower bound on $M_{\ell,b,s}$ is enough in general. We distinguish two cases:

\noindent$\triangleright$ If $\ell\geq2$, then $M_{\ell,b,s} \geq {n^{-2}(b+s)!^{\ell-1}}.$ 
If $b+s \geq \ln{n}$, this expression is greater than $\Theta(n^3)$ by Stirling formula. Otherwise, because $b(\ell+1)+s\ell=n$, we have $\ell \geq \frac{n}{\ln{n}}-1$ and as $b+s\geq 2$ the same $\Theta(n^3)$ lower bound holds.

\noindent$\triangleright$ If $\ell=1$, then $s = n-2b$ and $M_{\ell,b,s} \geq  \frac{(n-b)!}{n^2 (n-2b)!}.$
Clearly, this expression decreases as $b$ increases; for $b=3$ it is greater than $\Theta(n)$ (and there is only one such term) and for $b>3$ it is greater than $\Theta(n^3)$. If $b=1$, then $s=n-2$ and this is the term $N_{1,1,n-2}$.
The only remaining case is when $b=2$, $\ell=1$, and $s=n-4$. For this case by (\ref{leq:fraction1}), we get 
\begin{equation}
\frac{N_{1,1,n-2}}{N_{\ell,b,s}} \geq \left(\frac{(n-2)!}{b!s!(\ell+1)!^{b}\ell!^{s}}\right)^{k-1} = \left(\frac{(n-2)!}{8(n-4)!}\right)^{k-1} = \Theta(n^{2(k-1)}).
\end{equation}
Thus, we proved that the sum (\ref{eq:sum_bs}) is indeed asymptotically equal to the term $N_{1,1,n-2}$ multiplied by $n!$.\qed
\end{proof}

\subsection{Main Result and Conclusions}
Now, we are ready to prove our main result on the asymptotic number of strongly connected elements of $\G_{n,k}$ that are not synchronizing.
\begin{theorem}
\label{thm:main}
The probability that a random strongly connected almost-group automaton with $n$ states and $k\geq 2$ letters is not synchronizing is equal to 
\begin{equation}
({2^{k-1}-1}){n^{-2(k-1)}}\left(1+o(1)\right).
\end{equation}
In particular, random strongly connected almost-group automata are synchronizing with high probability as $n$ tends to infinity.
\end{theorem}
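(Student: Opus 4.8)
The plan is to combine the three main building blocks already established: the lower bound on non-synchronizing automata (Lemma~\ref{lem:lower_bound}), the matching upper bound (Lemma~\ref{lem:upper_bound}), and the count of strongly connected almost-group automata (Lemma~\ref{lem:sc_automata}), which serves as the denominator when we pass to a probability. First I would observe that the desired probability is the ratio
\begin{equation}
\frac{\#\{\text{non-synch.\ strongly connected elements of }\G_{n,k}\}}{\#\{\text{strongly connected elements of }\G_{n,k}\}}.
\end{equation}
The denominator is handled immediately by Lemma~\ref{lem:sc_automata}, which gives $(n-1)n!^k(1+o(n^{1-k}))$ strongly connected automata. The numerator is sandwiched between the lower bound from Lemma~\ref{lem:lower_bound} and the upper bound from Lemma~\ref{lem:upper_bound}, so the core of the argument is to check that these two bounds are asymptotically equal and then perform the division.

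The key computation is to verify that the lower and upper bounds on the numerator coincide to leading order. The lower bound reads $(2^{k-1}-1)n(n-1)(n-2)(n-2)!^{k}(1+o(n^{1-k}))$, while the upper bound reads $n(2^{k-1}-1)n!(n-2)!^{k-1}(1+o(1/n))$. I would rewrite the lower bound using $n(n-1)(n-2)(n-2)!^k = n! \,(n-2)\,(n-2)!^{k-1}$, where I used $n(n-1)(n-2)! = n!/(n-2)!\cdot(n-2)! $ grouped so that $n(n-1)(n-2)!=n!$ and one factor of $(n-2)$ and one factor $(n-2)!^{k-1}$ remain. This shows the lower bound equals $(2^{k-1}-1)\,n!\,(n-2)\,(n-2)!^{k-1}(1+o(1))$, which matches the upper bound $n(2^{k-1}-1)\,n!\,(n-2)!^{k-1}(1+o(1))$ up to the distinction between the factors $(n-2)$ and $n$; since $n-2 = n(1+o(1))$, both bounds agree asymptotically, and the numerator is therefore $(2^{k-1}-1)\,n\,n!\,(n-2)!^{k-1}(1+o(1))$.

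It then remains to form the quotient. Dividing the numerator by the denominator gives
\begin{equation}
\frac{(2^{k-1}-1)\,n\,n!\,(n-2)!^{k-1}}{(n-1)\,n!^{k}}\,(1+o(1)) = (2^{k-1}-1)\,\frac{n}{n-1}\,\frac{(n-2)!^{k-1}}{n!^{k-1}}\,(1+o(1)).
\end{equation}
Since $\frac{(n-2)!}{n!} = \frac{1}{n(n-1)} = n^{-2}(1+o(1))$ and $\frac{n}{n-1}=1+o(1)$, the right-hand side simplifies to $(2^{k-1}-1)\,n^{-2(k-1)}(1+o(1))$, which is exactly the claimed asymptotic. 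The final sentence of the theorem, that such automata are synchronizing with high probability, follows immediately because the error probability $(2^{k-1}-1)n^{-2(k-1)}\to 0$ as $n\to\infty$ for every fixed $k\geq 2$.

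I expect the only genuine subtlety to be bookkeeping of the error terms: one must ensure that the multiplicative error factors $1+o(n^{1-k})$, $1+o(1/n)$, and $1+o(1)$ combine correctly under the quotient so that the stated $1+o(1)$ is justified, and in particular that the denominator's $(1+o(n^{1-k}))$ does not degrade the precision. Because all three factors are of the form $1+o(1)$ and the dominant asymptotic is a clean power of $n$, this is routine; the main obstacle is simply organizing the algebraic cancellation of factorials cleanly, which the rewriting above is designed to make transparent.
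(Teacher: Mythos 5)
Your proof is correct and takes essentially the same approach as the paper: the paper's own (much terser) proof likewise observes that Lemma~\ref{lem:lower_bound} and Lemma~\ref{lem:upper_bound} give matching asymptotics for the number of non-synchronizing strongly connected automata, namely $(2^{k-1}-1)n^3(n-2)!^{k}(1+o(1/n))$, and then divides by the count of strongly connected elements of $\G_{n,k}$ from Lemma~\ref{lem:sc_automata}. Your factorial rewriting and error-term bookkeeping are sound; you have simply made explicit the algebra the paper leaves to the reader.
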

\begin{proof}
Lemma~\ref{lem:lower_bound} and Lemma~\ref{lem:upper_bound} give lower and upper bounds on the number of strongly-connected non-synchronizing almost-group automata, which are both equal to $(2^{k-1}-1)n^3(n-2)!^{k}(1+o(1/n))$. We conclude the proof using the estimation on the number of strongly-connected almost-group automata given in
 Lemma~\ref{lem:sc_automata}.\qed
\end{proof}

Thus we obtained a precise asymptotic on the probability for strongly-connected almost group automata of being synchronizable for any alphabet size. As in~\cite{Berl2013RandomAut}, it would be natural to design an algorithm which would verify whether a given random strongly-connected almost group automaton is synchronizing in optimal average time. Another, much more challenging problem, concerns estimation of the expected length of a shortest reset word for random automata in this setting. 

We are thankful to anonymous referees whose comments helped to improve the presentation of the results.
%% References
%%
%% Following citation commands can be used in the body text:
%% Usage of \cite is as follows:
%%   \cite{key}          ==>>  [#]
%%   \cite[chap. 2]{key} ==>>  [#, chap. 2]
%%   \citet{key}         ==>>  Author [#]

%% References with bibTeX database:

% \section*{References}

% \bibliographystyle{model1-num-names}
% \bibliographystyle{plain}
\bibliographystyle{splncs04}
\bibliography{sample.bib}

%% Authors are advised to submit their bibtex database files. They are
%% requested to list a bibtex style file in the manuscript if they do
%% not want to use model1-num-names.bst.

%% References without bibTeX database:

% \begin{thebibliography}{00}

%% \bibitem must have the following form:
%%   \bibitem{key}...
%%

% \bibitem{}

% \end{thebibliography}
\newpage
\section*{Appendix}
\label{sec:appendix}

\primelemma*
\begin{proof}
If a group automaton is not strongly-connected, then the set of states can be divided into two non-empty parts $Q_1$ and $Q_2$ such that there is no transition between them (if there is a transition from, say, $Q_1$ to $Q_2$ labeled by $a$, then following this $a$-cycle one finds a transition from $Q_2$ to $Q_1$ at some point). In other words, every non-strongly connected
group automaton can be built by (a) choosing a non-trivial partition of $E_n$ into $Q_1\dot{\cup}
\,Q_2$ such that $|Q_1|\leq |Q_2|$ and (b) choosing a group automaton on the set of states  $Q_1$ and another one  on  $Q_2$. Observe that this construction is not a bijection: if the automaton is made of, for example, three parts, there are several ways to choose $Q_1$ and $Q_2$. Hence, by counting
the number of such decompositions, we are over-counting the number of group automata that 
are not strongly connected, which is fine as we are looking for an upper bound.

The number of group automata that are not strongly connected is therefore at most, using $r$ to denote the cardinality of $Q_1$,
\begin{equation}
Z_{n,k} = \sum_{r=1}^{\lfloor \frac{n}{2} \rfloor}{n \choose r}(r!(n-r)!)^k,
\end{equation} 
since there are $\binom{n}{r}$ ways to choose the elements that are in $Q_1$, and then $r!^k$
group automata on $Q_1$ and $(n-r)!^k$ on $Q_2$. Observe that
\begin{align}
Z_{n,k} &= n! \sum_{r=1}^{\lfloor \frac{n}{2} \rfloor}\left( r!(n-r)!\right)^{k-1} = \\ 
&= n!\left((n-1)!^{k-1} + 2^{k-1}(n-2)!^{k-1} + \sum_{r=3}^{\lfloor \frac{n}{2} \rfloor}\left( r!(n-r)!\right)^{k-1} \right) = \\ 
&= n(n-1)!^k\left(1 + \frac{2^{k-1}}{(n-1)^{k-1}}+n^{k-1}\sum_{r=3}^{\lfloor \frac{n}{2} \rfloor}\binom{n}{r}^{1-k}\right).
\end{align}

This concludes the proof since $\binom{n}{r}^{1-k}\leq \binom{n}{3}^{1-k}$ in the range of the sum and, therefore,
$\sum_{r=3}^{\lfloor \frac{n}{2} \rfloor}\binom{n}{r}^{1-k}\leq \frac{n}2\O(n^{3-3k})$.\qed
\end{proof}

Using the same kind of techniques as in Lemma~\ref{lem:sc_group_automata}, we can obtain an upper bound on the number of almost-group automata
that are not strongly connected.

\secprimelemma*
\begin{proof}
If an automaton of $\G_{n,k}$ is not strongly-connected, then its set of states $E_n$ can be divided into two non-empty parts $Q_1$ and $Q_2$ such that there is no transition from $Q_2$ to $Q_1$. 

To continue the proof, we distinguish two cases, whether there is a transition from $Q_1$ to $Q_2$ or not.
Recall that $p_0$ is the dangling state, with no incoming transition labelled by $a_0$.

\noindent $\triangleright$ If there is a transition $q_1\xrightarrow[]{a}q_2$ from $q_1$ to $q_2$  where $q_1\in Q_1, q_2\in Q_2$.
We first prove by contradiction that $a=a_0$ and $q_1=p_0$: It it is not the case, then the transition $q_1\xrightarrow[]{a}q_2$ belongs to a cycle labelled by $a$, which is not possible
as such a cycle would contain a transition from $Q_2$ to $Q_1$. Hence $p_0\xrightarrow[]{a_0}q_2$ is the only transition from $Q_1$ to $Q_2$ in this case. We obtain an upper bound
of the number of automata in this case by (1) choosing the $r$ states in $Q_1$, (2) choosing
the actions of all letters but $a_0$ on both $Q_1$ and $Q_2$, (3) choosing the dangling state
$p_0$ in $Q_1$ and its image by $a_0$ in $Q_2$, and (4) choosing the action of $a_0$ on
$Q_1\setminus\{p_0\}$ and on $Q_2$. As in Lemma~\ref{lem:sc_group_automata}, this yields an upper bound, not an exact counting, since some automata are counted several times this way. Therefore, the upper bound we obtain in this case, for fixed $r\in\{1,\ldots,n-1\}$ is:
\begin{equation}\label{eq:case1}
\binom{n}{r}r!^{k-1}(n-r)!^{k-1} r(n-r) (r-1)!(n-r)! = \binom{n}{r} r!^k(n-r)!^k(n-r).
\end{equation}

\noindent $\triangleright$ If there is no transition from $Q_1$ to $Q_2$, the automaton is really
split in two (or more) components, even as an undirected graph. The dangling state is either in $Q_1$ or $Q_2$, we can assume it is in $Q_1$ by symmetry. The restriction to $Q_1$ of the automaton is an almost-group automaton, and the restriction to $Q_2$ is a group automaton. We therefore have the following upper bound for such automata for fixed $r=|Q_1|$, using Lemma~\ref{lem:ag automata}:
\begin{equation}\label{eq:case2}
\binom{n}{r}|\G_{r,k}|(n-r)!^k = \binom{n}{r}(r-1)r!^k(n-r)!^k.
\end{equation}
To conclude the proof, we just have to sum everything up for $1 \leq r \leq n-1$:
\begin{equation}
\sum_{r=1}^{n-1}{n \choose r}r!^k(n-r)!^{k}(r-1 + n-r) = (n-1)\sum_{r=1}^{n-1} \binom{n}{r}r!^k(n-r)!^{k}.
\end{equation}
This concludes the proof as the sum is equal to twice the value of $Z_{n,k}$ of the proof of Lemma~\ref{lem:sc_group_automata}, possibly with a negligible extra central value.\qed
\end{proof}

\end{document}